\theoremstyle{definition}
\newtheorem{theorem}{Theorem} 
\newtheorem{definition}{Definition}
\newtheorem{proposition}[definition]{Proposition}
\newtheorem{lemma}[definition]{Lemma}
\newtheorem{corollary}[definition]{Corollary}
\newtheorem{remark}[definition]{Remark}
\newcommand{\Not}{{{\sim}}}
\newcommand{\Equiv}{{{\leftrightarrow}}}
\title{Kamide is in America, \\Moisil and Leitgeb are in Australia\thanks{The research by Satoru Niki has received funding from the European Research Council (ERC) under the European Union's Horizon 2020 research and innovation programme, grant agreement ERC-2020-ADG, 101018280, ConLog. The research by Hitoshi Omori was supported by a Sofja Kovalevskaja Award of the Alexander von Humboldt-Foundation, funded by the German Ministry for Education and Research.}}
\author{Satoru Niki
\institute{Department of Philosophy I\\
Ruhr University Bochum\\
Bochum, Germany}
\email{Satoru.Niki@rub.de}
\and
Hitoshi Omori
\institute{Graduate School of Information Sciences\\
Tohoku University\\
Sendai, Japan}
\email{hitoshiomori@gmail.com}
}
\begin{document}
\maketitle

\begin{abstract}
It is not uncommon for a logic to be invented multiple times, hinting at its robustness. This trend is followed also by the expansion {\bf BD+} of Belnap-Dunn logic by Boolean negation. Ending up in the same logic, however, does not mean that the semantic interpretations are always the same as well. In particular, different interpretations can bring us to different logics, once the basic setting is moved from a classical one to an intuitionistic one. For {\bf BD+}, two such paths seem to have been taken; one ({\bf BDi}) by N. Kamide along the so-called American plan, and another ({\bf HYPE}) by G. Moisil and H. Leitgeb along the so-called Australian plan. The aim of this paper is to better understand this divergence. This task is approached mainly by (i) formulating a semantics for first-order {\bf BD+} that provides an Australian view of the system; (ii) showing connections of the less explored (first-order) {\bf BDi} with neighbouring systems, including an intermediate logic and variants of Nelson's logics.
\end{abstract}

\section{Introduction}

Since the birth of modern logic, with an enormous help from mathematical tools, we have seen many important and interesting formal theories being developed. Among the vast number of formal theories in the literature, those that are based on classical logic and intuitionistic logic have been particularly successful and explored in great depth. 

Soon after the initial developments of intuitionistic logic and theories based on it, there were a number of attempts in comparing the theories based on classical logic and theories based on intuitionistic logic. These comparisons, in many cases, are highly non-trivial, and sometimes even surprising. For example, take one of the most famous modal logic {\bf S5}. Then, it turns out that there are \emph{uncountably} many systems of intuitionistic version of {\bf S5} that will all collapse into classical {\bf S5} once one of the familiar formulas (e.g. the law of excluded middle, elimination of double negation, or Peirce’s law, and others) are added to the intuitionistic versions (cf. \cite[Corollary 2.4]{Ono1977}). Corresponding intuitionistic versions, therefore, of various formal theories may come along with a lot of surprising results, and also seem to bring us some new insights towards a deeper understanding of theories based on classical logic. 

In the present article, we will focus on the system {\bf BDi} developed by Norihiro Kamide in \cite{kamide2021modal}. In brief, {\bf BDi} is an intuitionistic version of the system {\bf BD+} which can be seen in at least two different ways: (i) as an expansion of classical logic by de Morgan negation, or (ii) as an expansion of {\bf FDE} (or Belnap-Dunn logic), expanded by Boolean negation. As we shall point out later in some more details, various systems that are definitionally equivalent to the system {\bf BD+} have been developed independently by various authors, and that seems to partly confirm the naturalness and importance of the system {\bf BD+}. Therefore, Kamide's attempt of investigating the intuitionistic version of {\bf BD+} seems to be of importance. 

Furthermore, as the title may already make some of the readers guess, there are interesting ways to connect Kamide's {\bf BDi} to yet another expansion of intuitionistic logic that has been known and studied by a few authors. Very roughly put, what is nowadays best known as {\bf HYPE}, (re)introduced by Hannes Leitgeb in \cite{Leitgeb2019hype}, though already introduced by Grigore Constantin Moisil in 1942, can be seen as another system that can be seen as an intuitionistic counterpart of {\bf BD+} (see \cite{drobyshevich2022moisil} for a detailed view of Moisil's work).
Somewhat more precisely, Kamide's {\bf BDi} can be viewed as an intuitionistic counterpart of {\bf BD+} in light of the American plan for negation in {\bf FDE}, while the system explored by Moisil and Leitgeb can be viewed as an intuitionistic counterpart of {\bf BD+} in light of the Australian plan for negation in {\bf FDE}.

Against these backgrounds, the aim of this article is twofold. First, we will clarify the relations of systems {\bf BD+}, {\bf HYPE}, and {\bf BDi}. To this end, we will present another semantics for {\bf BD+} that offers a systematic view on the systems related to {\bf BD+}. Second, we will explore a few extensions and variations of {\bf BDi}, and in particular, establish some basic results for the extension of {\bf BDi} obtained by adding the \emph{ex contradictione quodlibet}. Most of our results are obtained for the language with first-order quantifiers.

\section{Semantics and proof system for {\bf BD+}}

The predicate language $\mathcal{L}_{Q}$ consists of connectives $\{\bot,\Not,\land,\lor,\to\}$, quantifiers $\{\forall,\exists\}$, countable sets of constants $\mathsf{Con}=\{c_{1},c_{2},\ldots\}$, variables $\mathsf{Var}=\{v_{1},v_{2},\ldots\}$ and $n$-ary predicates $\mathsf{Pred}=\{P^{n}_{1},P^{n}_{2},\ldots: n\in\mathbb{N}\}$. A \emph{term} is either a constant or a variable. The set of formulas in $\mathcal{L}_{Q}$ will be denoted by $\mathsf{Form}_{Q}$.

\vspace{-2mm}

\subsection{Preliminaries}
Let us recall the semantics in \cite[Definition 18]{Kamide2017extended}, for which we take $\bot$ and not $\neg$ as primitive here.

\begin{definition}
A {\bf QBD+}-Dunn-model for the language $\mathcal{L}_{Q}$ is a pair $\langle D, V \rangle$ where $D\supseteq \textsf{Con}$ is a non-empty set and we 
assign both the {\em extension} $V^{+}(P^n) \subseteq D^{n}$ and the {\em anti-extension} $ V^{-}(P^n) \subseteq D^{n}$ to each $n$-ary predicate symbol $P^n$. Valuations $V$ are then extended to interpretations $I$ for all the sentences of $\mathcal{L}_{Q}$ ($\textsf{Sent}_{Q}$) expanded by $D$
inductively as follows: as for the atomic {\em sentences},
\begin{itemize}
\setlength{\parskip}{0cm}
\setlength{\itemsep}{0cm}
\item $1{\in}I(P^n(t_{1},...,t_{n})) \textrm{ iff } \langle t_{1},\dots,t_{n} \rangle{\in}V^{+}(P^n)$,
\item $0{\in}I(P^n(t_{1},...,t_{n})) \textrm{ iff } \langle t_{1},\dots,t_{n} \rangle {\in}V^{-}(P^n)$.
\end{itemize}
The rest of the clauses are as follows:
\[
\begin{array}{llllll}
1 \not\in I(\bot), & & &0 \in I(\bot), &  & \\
1 \in I(\Not A) &\textrm{iff} & 0 \in I(A), &0 \in I(\Not A) &\textrm{iff} & 1 \in I(A),\\
1 \in I(A \land B) &\textrm{iff} & 1 \in I(A) \textrm{ and } 1 \in I(B), &0 \in I(A \land B) &\textrm{iff} & 0 \in I(A) \textrm{ or } 0 \in I(B),\\
1 \in I(A \lor B) &\textrm{iff} & 1 \in I(A) \textrm{ or } 1 \in I(B), &0 \in I(A \lor B) &\textrm{iff} & 0 \in I(A) \textrm{ and } 0 \in I(B),\\
1 \in I(A {\to} B) &\textrm{iff} & 1 \not\in I(A) \textrm{ or } 1 \in I(B), &0 \in I(A {\to} B) &\textrm{iff} & 0 \not\in I(A) \textrm{ and } 0 \in I(B),\\
1 \in I(\forall x A) &\textrm{iff} & 1 \in I(A(d)), \textrm{for all $d \in D$}, &0 \in I(\forall x A) &\textrm{iff} & 0 \in I(A(d)), \textrm{for some $d \in D$},\\
1 \in I(\exists x A) &\textrm{iff} & 1 \in I(A(d)), \textrm{for some $d \in D$}, &0 \in I(\exists x A) &\textrm{iff} & 0 \in I(A(d)), \textrm{for all $d \in D$}.
\end{array}
\]
Finally, let $\Gamma \cup \{ A \}$ be any set of sentences. Then, $A$ is a {\em {\bf BD+}-semantic consequence} from $\Gamma$ $(\Gamma \models A)$ iff for all {\bf QBD+}-Dunn-models $\langle D,V \rangle$, $1 \in I(A)$ if $1 \in I(B)$ for all $B \in \Gamma$. 
\end{definition}

\begin{remark}
Note that the unary operation $\neg A$ defined as $A{\to} \bot$ is Boolean Negation in the sense that:
\begin{itemize}
\setlength{\parskip}{0cm}
\setlength{\itemsep}{0cm}
\item $1 \in I(\neg A)$  iff $1 \not\in I(A)$, and $0 \in I(\neg A)$  iff $0 \not\in I(A)$.
\end{itemize}
For a discussion on the notion of classical negation in {\bf FDE} and their extensions, see \cite{DeOmori15}.\footnote{For those who are ready to accept \emph{non-deterministic} classical negation, see also \cite{SzmucOmori2022}.}

Moreover, note that we have the following equivalences.
\begin{itemize}
\setlength{\parskip}{0cm}
\setlength{\itemsep}{0cm}
\item $1 \in I(\Not (\Not B {\to}\Not A))$  iff 
$1\in I(A)$ and $1 \not\in I(B)$, and $0 \in I(\Not (\Not B {\to}\Not A))$  iff 
$0 \in I(A)$ or $0 \not\in I(B)$.
\end{itemize}
Therefore, the connective $\leftarrow$ of the system {\bf SPL} introduced by Kamide and Wansing in \cite{Kamide2010symmetric} is definable in {\bf BD+}. This implies that {\bf SPL} and {\bf BD+} are definitionally equivalent.
\end{remark}

\begin{remark}
As already observed in \cite[\S3.5]{DeOmori15}, there are a few systems in the literature that are definitionally equivalent to {\bf BD+}. Those include, the system {\bf PM4N} formulated in the language $\{ \neg , \land , \lor , \Box \}$ by Jean-Yves B\'eziau in \cite{JYB2011}, and the system {\bf FDEP} formulated in the language $\{ \Not , \to \}$ by Dmitry Zaitsev in \cite{Zaitsev2012}. We already added another system {\bf SPL} in the previous remark, and we may add another more recent rediscovery by Arnon Avron. More specifically, Avron, in \cite{Avron2020normal}, introduces the system {\bf SE4} in the context of exploring expansions of {\bf FDE} by a conditional that are self-extensional.
\end{remark}

We now turn to the proof system, again recalling the definition and completeness theorem from \cite{Kamide2017extended}.
\begin{definition}
Consider the following axioms and rules where $\neg A$ and $A\Equiv B$ abbreviate $A{\to} \bot$ and $(A{\to} B){\land} (B{\to} A)$ respectively:


\vspace{-6mm}
\begin{minipage}{.5\textwidth}
\begin{align*}
& A {\to} (B {\to} A) \tag{Ax1} \label{Ax1} \\
& (A {\to} (B {\to} C)){\to} ((A {\to} B)  {\to} (A {\to} C)) \tag{Ax2} \label{Ax2} \\
& ((A {\to} B){\to} A){\to} A \tag{Ax3} \label{Ax3} \\
& (A \land B) {\to} A \tag{Ax4} \label{Ax4} \\
& (A \land B) {\to} B \tag{Ax5} \label{Ax5}\\
& (C{\to} A) {\to} ((C{\to} B) {\to} (C{\to} (A {\land} B))) \tag{Ax6} \label{Ax6}\\
& A {\to} (A \lor B) \tag{Ax7} \label{Ax7} \\
& B {\to} (A \lor B) \tag{Ax8} \label{Ax8}\\
& (A {\to} C) {\to} ((B {\to} C) {\to} ((A {\lor} B) {\to} C)) \tag{Ax9} \label{Ax9} \\
& \bot {\to} A \tag{Ax10} \label{Ax10} \\
& \frac{\ A\quad A{\to} B \ }{B} \label{MP} \tag{MP}
\end{align*}
\end{minipage}
\ 
\begin{minipage}{.4\textwidth}
\begin{align*}
& A(t)\to\exists{x}A \tag{Ax11}  \label{Ax11}\\
& \forall{x}(A{\to} B)\to(\exists{y}A(y){\to} B) \tag{Ax12} \label{Ax12}\\
& \forall{x}(B{\to} A){\to}(B{\to}\forall{x}A) \tag{Ax13} \label{Ax13}\\
& \forall{x}A{\to} A(t) \tag{Ax14} \label{Ax14}\\
& A {\to} \Not \bot \tag{Ax15} \label{deMbot}\\
& \Not \Not A\Equiv A \tag{Ax16} \label{deMN}\\
& \Not (A\land B)\Equiv (\Not A\lor \Not B) \tag{Ax17} \label{deMC} \\
& \Not (A\lor B)\Equiv (\Not A\land \Not B) \tag{Ax18} \label{deMD}\\
& \Not (A{\to} B)\Equiv (\neg\Not A{\land} \Not B) \tag{Ax19} \label{deMI}\\ 
& \Not\forall{x}A{\leftrightarrow}\exists{x}\Not A \tag{Ax20} \label{deMU}\\
& \Not\exists{x}A{\leftrightarrow}\forall{x}\Not A \tag{Ax21} \label{deMP}\\
& \frac{\ A \ }{\forall{x}A} \label{Gen} \tag{Gen}
\end{align*}
\end{minipage}

\vspace{2mm}

\noindent We write $\Gamma\vdash A$ if there is a finite list $B_{1},\ldots,B_{n}\equiv A$ such that each $B_{i}$ is either an element of $\Gamma$, an instance of one of the axioms, or obtained from previous items in the list by \eqref{MP} or \eqref{Gen}.
\end{definition}


\begin{theorem}\label{thm:QBD+Dunn}
For all $\Gamma\cup\{A\}\subseteq\mathsf{Sent}_{Q}$, $\Gamma\vdash A$ iff $\Gamma\models A$.
\end{theorem}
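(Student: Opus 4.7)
The plan is the standard soundness-completeness pattern for first-order classical logic, adapted to the Dunn-style two-valued semantics with the additional de Morgan negation $\Not$. Soundness is routine; the real work lies in the completeness direction, which I would establish by a Henkin-style canonical model construction.

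For soundness, I would induct on the length of a derivation and verify, one by one, that each axiom is valid and each rule preserves validity. The positive clauses of the semantics for $\land,\lor,\to,\forall,\exists$ mirror the classical truth-conditions---with Boolean negation $\neg$ definable via $\to$ and $\bot$, as recorded in the remark---so Ax1--Ax14 together with MP and Gen are validated exactly as in classical first-order logic. For Ax15--Ax21, one simply reads off the corresponding negative clauses of the semantics: e.g., $1 \in I(\Not(A \to B))$ iff $0 \in I(A \to B)$ iff $0 \notin I(A)$ and $0 \in I(B)$, which is exactly $1 \in I(\neg \Not A \land \Not B)$ and so yields Ax19.

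For completeness, I would proceed by contraposition. Suppose $\Gamma \not\vdash A$. Expand $\mathcal{L}_Q$ with countably many fresh constants, enumerate all sentences of the expansion, and build, by the usual Lindenbaum-Henkin construction, a maximally consistent extension $\Gamma^* \supseteq \Gamma$ with $A \notin \Gamma^*$ that enjoys the Henkin witness property: for every $\exists x B \in \Gamma^*$ there is a constant $c$ with $B(c) \in \Gamma^*$, and whenever $\forall x B \notin \Gamma^*$ there is a constant $c$ with $B(c) \notin \Gamma^*$. The canonical model $\langle D,V \rangle$ takes $D$ to be the set of constants of the expanded language and sets $V^+(P^n) = \{\langle c_1,\ldots,c_n\rangle : P^n(c_1,\ldots,c_n) \in \Gamma^*\}$, and $V^-(P^n)$ analogously using $\Not P^n(c_1,\ldots,c_n) \in \Gamma^*$.

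The key step is a simultaneous truth lemma: for every sentence $B$, $1 \in I(B)$ iff $B \in \Gamma^*$, and $0 \in I(B)$ iff $\Not B \in \Gamma^*$. The positive half is the standard classical Henkin induction on the complexity of $B$, drawing only on Ax1--Ax14, MP, Gen, and the witness property for the quantifier cases. The negative half is where Ax15--Ax21 earn their keep, by allowing $\Not$ to be pushed past each connective and each quantifier. The delicate case is implication: Ax19 gives $\Not(A \to B) \in \Gamma^*$ iff $\neg \Not A \in \Gamma^*$ and $\Not B \in \Gamma^*$, and one then uses the deduction theorem together with maximal consistency to check that $\neg \Not A \in \Gamma^*$ iff $\Not A \notin \Gamma^*$, which matches the negative clause for $\to$ once the inductive hypothesis is applied; the quantifier cases combine Ax20 and Ax21 with the witness property. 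The anticipated obstacle is not any individual clause but keeping the two halves of the induction synchronised through the $\Not$-cases; once that is arranged, $\langle D,V\rangle$ furnishes the required countermodel to $\Gamma \models A$.
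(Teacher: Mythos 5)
Your proposal is correct, but note that the paper itself gives no proof of this theorem: it is recalled verbatim from Kamide's earlier work \cite{Kamide2017extended}, so there is no in-text argument to match yours against. Judged on its own terms, your Henkin construction goes through: the positive fragment $\{\bot,\land,\lor,\to,\forall,\exists\}$ is fully classical (Ax3 and Ax10 on top of Ax1--Ax2, Ax4--Ax9), so a Lindenbaum--Henkin maximal consistent $\Gamma^*$ behaves classically, and Ax15--Ax21 are exactly the equivalences needed to make the negative half of the truth lemma ($0\in I(B)$ iff $\Not B\in\Gamma^*$) commute with each connective and quantifier; your treatment of the implication case via Ax19 and of the quantifier cases via the witness property is the right bookkeeping, and the simultaneous induction is well-founded because each half of the claim for $\Not B$ only calls on the other half for $B$. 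It is worth observing that your route differs from the one the paper (and its source) favours for results of this kind: there, and in the completeness proof for {\bf QBDi3} later in the paper, one first applies a reduction $f$ that pushes $\Not$ down to prime formulas, then replaces $\Not P$ by fresh atoms $P'$ and invokes the known completeness of the $\Not$-free base logic wholesale. That translation method buys reuse of an off-the-shelf completeness theorem; your direct canonical model buys a self-contained argument in which the semantic role of each de Morgan axiom is visible in the corresponding clause of the truth lemma. Either is acceptable here.
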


\subsection{Another semantics}
Before moving ahead, let us introduce another semantics for {\bf BD+}.\footnote{The propositional fragment is already introduced briefly in \cite{Omori2019sixteen}.}

\begin{definition}
A {\bf QBD+}-star-model for the language $\mathcal{L}_{Q}$ is a quadruple $\langle W, \ast, D, V \rangle$ where $W$ is a non-empty set (of states); $\ast$ is a function on $W$ with $w^{\ast\ast}=w$ for all $w\in W$; $D\supseteq \textsf{Con}$ is a non-empty set and we assign 
the {\em extension} $V(w, P^n) \subseteq D^{n}$ to each $n$-ary predicate symbol $P^n$ and $w\in W$. Valuations $V$ are then extended to interpretations $I$ for all the state-sentence pairs of $\mathcal{L}$ expanded by $D$
inductively as follows: as for the atomic {\em sentences}, 
\begin{itemize}
\item $I(w, P^n(t_{1},...,t_{n}))=1$ iff $\langle t_{1},\dots,t_{n} \rangle \in V(w, P^n)$.
\end{itemize}
The rest of the clauses are as follows:

\smallskip

\noindent 
\begin{minipage}{.47\textwidth}
\begin{itemize}
\setlength{\parskip}{0cm}
\setlength{\itemsep}{0cm}
\item $I(w, \bot)\neq 1$, 
\item $I(w, \Not A)=1$ iff $I(w^\ast, A)\neq 1$,
\item $I(w, A {\land} B){=}1$ iff $I(w, A){=}1$ and $I(w, B){=}1$,
\item $I(w, A {\lor} B){=}1$ iff $I(w, A){=}1$ or $I(w, B){=}1$,
\end{itemize}
\end{minipage}
\begin{minipage}{.52\textwidth}
\begin{itemize}
\setlength{\parskip}{0cm}
\setlength{\itemsep}{0cm}
\item $I(w, A {\to} B){=}1$ iff $I(w, A){\neq} 1$ or $I(w, B){=}1$,
\item $I(w, \forall x A){=}1$ iff $I(w, A(d)){=}1$, for all $d {\in} D$, 
\item $I(w, \exists x A){=}1$ iff $I(w, A(d)){=}1$, for some $d {\in} D$.
\end{itemize}
\end{minipage}

\smallskip
\noindent Finally, let $\Gamma \cup \{ A \}$ be any set of sentences. Then, $A$ is a {\em {\bf BD+}-star-semantic consequence} from $\Gamma$ $(\Gamma \models_\ast A)$ iff for all {\bf QBD+}-star-models $\langle W, \ast, D, V \rangle$, and for all $w\in W$, $I(w, A)=1$ if $I(w, B)=1$ for all $B \in \Gamma$. 
\end{definition}

Then, we obtain the following result.

\begin{proposition}
For all $\Gamma\cup\{A\}\subseteq\mathsf{Sent}_{Q}$, $\Gamma\vdash A$ iff $\Gamma\models_\ast A$.
\end{proposition}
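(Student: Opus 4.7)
The plan is to leverage the already-established completeness theorem for the Dunn-style semantics (Theorem~\ref{thm:QBD+Dunn}) and reduce the claim to showing that the two semantics $\models$ and $\models_\ast$ coincide on sentences. Concretely, I will set up a translation in both directions between \textbf{QBD+}-Dunn-models and \textbf{QBD+}-star-models, and prove an inductive ``dual truth'' lemma that establishes the equivalence on all formulas.

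For the translation from star-models to Dunn-models, given $\langle W,\ast,D,V\rangle$ and a state $w\in W$, I define $V^{+}(P^n):=V(w,P^n)$ and $V^{-}(P^n):=D^n\setminus V(w^\ast,P^n)$. Conversely, given a Dunn-model $\langle D,V\rangle$, I build a two-point star-model with $W=\{0,1\}$, involution $0^\ast=1$, $1^\ast=0$, same domain $D$, and $V(0,P^n):=V^{+}(P^n)$, $V(1,P^n):=D^n\setminus V^{-}(P^n)$. The key lemma, proved simultaneously by induction on the complexity of a sentence $A$ (of $\mathcal{L}_Q$ expanded by $D$), will state the twin equivalences
\[
I(w,A)=1 \iff 1\in I(A),\qquad I(w^\ast,A)=1 \iff 0\notin I(A).
\]
The atomic and $\bot$ cases are immediate from the definitions (recall that $0\in I(\bot)$ is stipulated in the Dunn clauses, corresponding to $I(w^\ast,\bot)\neq 1$). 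Conjunction, disjunction, and the quantifier cases unfold by routine case analysis using the De Morgan-shaped Dunn clauses and the set-theoretic contrapositives ``not-or'' and ``not-and'' that make $w^\ast$ behave like the anti-extension pointwise.

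The two cases that require a little care, and which I expect to be the main (though still routine) obstacle, are negation and implication, because they are exactly the connectives whose Dunn clauses cross-wire the positive and negative components, while the star clauses cross-wire $w$ and $w^\ast$. For $\Not A$ at $w$ one computes $I(w,\Not A)=1$ iff $I(w^\ast,A)\neq 1$, which by the induction hypothesis is equivalent to $0\in I(A)$, i.e., $1\in I(\Not A)$; and at $w^\ast$ one uses $w^{\ast\ast}=w$ symmetrically, matching $1\in I(A)$ with $0\notin I(\Not A)$. For $A\to B$ at $w$ the star clause reads directly off the Dunn clause for $1\in I(A\to B)$; at $w^\ast$ one unpacks $I(w^\ast,A\to B)=1$ as $I(w^\ast,A)\neq 1$ or $I(w^\ast,B)=1$, which by the induction hypothesis is $0\in I(A)$ or $0\notin I(B)$, i.e., $0\notin I(A\to B)$ by the Dunn clause.

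Once the lemma is in place, the two directions of the proposition follow by easy contraposition: a star-countermodel to $\Gamma\vdash A$ at some $w$ yields, via the first translation, a Dunn-countermodel, and a Dunn-countermodel yields, via the second translation, a star-countermodel at the world $0$. Combining these equivalences with Theorem~\ref{thm:QBD+Dunn} gives $\Gamma\vdash A$ iff $\Gamma\models_\ast A$, as required.
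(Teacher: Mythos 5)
Your argument is correct, and its completeness half is essentially identical to the paper's: the same two-point star-model $W=\{0,1\}$ with the involution swapping the points, the same valuation $V(0,P^n)=V^{+}(P^n)$, $V(1,P^n)=D^n\setminus V^{-}(P^n)$, and the same inductive ``dual truth'' lemma matching truth at one point with $1\in I(\cdot)$ and truth at the other with $0\notin I(\cdot)$. Where you genuinely diverge is in the soundness half. The paper proves $\Gamma\vdash A\Rightarrow\Gamma\models_\ast A$ syntactically, by verifying the axioms and rules directly against the star semantics (it displays the computation for \eqref{deMI}, the one axiom where the $\ast$-shift actually does work), and only invokes Theorem~\ref{thm:QBD+Dunn} for the completeness direction. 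You instead prove the second model translation (star-model at $w$ $\mapsto$ Dunn-model via $V^{+}(P^n):=V(w,P^n)$, $V^{-}(P^n):=D^n\setminus V(w^\ast,P^n)$), thereby establishing that $\models$ and $\models_\ast$ are literally the same consequence relation on $\mathsf{Sent}_Q$, and then import both directions from Theorem~\ref{thm:QBD+Dunn} at once. Your translation and twin-equivalence lemma check out (the $\Not$ case relies on $w^{\ast\ast}=w$ exactly as you say, and the $\to$ case at $w^\ast$ correctly recovers the contrapositive of the Dunn falsity clause), so the argument is sound. What your route buys is uniformity and economy: no axiom-by-axiom verification, and a slightly stronger intermediate fact (semantic coincidence) that makes the proposition an immediate corollary. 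What the paper's route buys is independence of the soundness half from Theorem~\ref{thm:QBD+Dunn} and an explicit demonstration of how the star clauses validate the characteristic axiom \eqref{deMI}, which is the point the surrounding remarks about contraposition and {\bf HYPE} want to highlight.
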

\begin{proof}
For the soundness direction, we will only check the case for \eqref{deMI}. For all $A, B\in \mathsf{Sent}_{Q}$ and for all $w\in W$: $I(w, \Not (A{\to} B)){=}1$ iff $I(w^\ast, A{\to} B){\neq} 1$ iff $I(w^\ast, A){=}1$ and $I(w^\ast, B){\neq} 1$ iff $I(w, \neg \Not A){=}1$ and $I(w, \Not B){=} 1$ iff $I(w, \neg \Not A\land \Not B){=} 1$. Therefore, we obtain the desired result.

For the completeness direction, it suffices to show that $\Gamma\models_\ast A$ only if $\Gamma\models A$ by Theorem~\ref{thm:QBD+Dunn}. Suppose $\Gamma\not\models A$. Then, there is a {\bf QBD+}-Dunn-model $\langle D_0, V_0 \rangle$ such that $1 {\not\in} I_0(A)$ and $1 {\in} I_0(B)$ for all $B {\in} \Gamma$. Define a {\bf QBD+}-star-model $\langle W_1, \ast_1, D_1, V_1 \rangle$ as follows: $W_1{:=} \{ a, b \}$; $a^\ast{=}b, b^\ast{=}a$; $D_1{:=} D_0$; $V_1(a, P^n){:=} V_0^+(P^n)$, $V_1(b, P^n){:=} D^n\setminus V_0^-(P^n)$.
Then, we can show that the following holds for all sentences: 
\begin{itemize}
\setlength{\parskip}{0cm}
\setlength{\itemsep}{0cm}
\item $I_1(a, A)=1$ iff $1\in I_0(A)$ and $I_1(b, A)=1$ iff $0\not\in I_0(A)$
\end{itemize}
We can prove this by induction, but the details are straightforward and safely left to the readers. We are then ready to conclude that $\Gamma\not\models_\ast A$ since we have $I_1(a, A)\neq 1$ and $I_1(a, B)=1$ for all $B \in \Gamma$ in the {\bf QBD+}-star-model $\langle W_1, \ast_1, D_1, V_1 \rangle$. This completes the proof.
\end{proof}

\begin{remark}
Both for {\bf SPL} and {\bf SE4}, the status of the contraposition rule is highlighted, and this becomes even clearer once we have the star semantics. We may also add that our proof can be seen as an alternative proof to the result on the admissibility of contraposition rule in {\bf BD+} established by Kamide in \cite[Theorem 16]{Kamide2022} in which two sequent calculi are made use of. 

Moreover, the star semantics makes the relation between {\bf HYPE} and {\bf BD+} (and its definitionally equivalent systems) explicit. Indeed, by building on the semantics for {\bf HYPE} presented by Sergei Odintsov and Heinrich Wansing in \cite{odintsovWansing2021}, it is easy to see that {\bf BD+} is obtained by trivialising the partial order which is necessary to capture the constructive conditional.
\end{remark}

\section{{\bf N3}-style extension of {\bf BDi}}
In \cite{kamide2021modal}, Norihiro Kamide presented an intuitionistic version of the system {\bf BD+}. This variant {\bf BDi} can also be seen as a variant of the system {\bf N4} of Almukdad and Nelson \cite{almukdad1984}, obtained by changing the falsity condition for implication. It then is a natural question to study an extension of {\bf BDi} with the characteristic axiom for {\bf N3} \cite{nelson1949}, the explosive variant of {\bf N4}. We shall see that this extension, henceforth called {\bf BDi3}, validates the principle of \emph{potential omniscience} investigated by Ichiro Hasuo and Ryo Kashima \cite{hasuo2003kripke}, in contrast to the case for {\bf N3}. This motivates us to consider {\bf BDi3} as a predicate logic {\bf QBDi3}, since potential omniscience implies the \emph{double negation shift} (a.k.a. \emph{Kuroda's conjecture}) $\forall{x}\neg\neg A\to\neg\neg\forall{x}A$.

\subsection{Semantics}

\begin{definition}
A {\bf QBDi3}-model for the language $\mathcal{L}_{Q}$ is a quadruple $\langle W,\leq, D, V\rangle$, where $W$ is a non-empty set (of states); $\leq$ is a partial ordering on $W$; $D$ is a mapping that assigns to each $w\in W$ a set $D(w)\supseteq\mathsf{Con}$, with a proviso that $x\geq w$ implies $D(x)\supseteq D(w)$. As an additional condition, $(W,\leq)$ has to satisfy $\forall{w\in W}\exists{x\geq w}(\forall{y}(y\geq x\Rightarrow y=x))$, i.e. any state has a maximal successor.\\
\indent 
$V$ assigns both the {\em extension} $V^{+}(w,P^n) \subseteq (D(w))^{n}$ and the {\em anti-extension} $V^{-}(w,P^n) \subseteq (D(w))^{n}$ to each $n$-ary predicate symbol $P^n$ and a state $w$, such that $V^{+}(w,P^n)\cap V^{-}(w,P^n)=\emptyset$. Moreover, $V^{+}$ and $V^{-}$ must be monotone: $\langle d_{1},\ldots d_{n}\rangle\in V^{*}(w,P^n)$ and $x\geq w$ implies  $\langle d_{1},\ldots d_{n}\rangle\in V^{*}(x,P^n)$ for $*\in\{+,-\}$.
Additionally, we assume $V$ to be \emph{potentially omniscient}, i.e. for all $w\in W$ and $\langle d_{1},\ldots,d_{n}\rangle\in (D(w))^{n}$: for all $x\geq w$ there exists $y\geq x$: $\langle d_{1},\ldots,d_{n}\rangle\in V^{+}(y, P^n)\cup V^{-}(y,P^n)$. $V$ is extended to the interpretation $I$ to state-sentence pairs (of $\mathsf{Sent}_{{\bf D}}$, i.e. $\mathcal{L}_{Q}$ extended with $\mathbf{D}:=\cup_{w\in W} D(w)$) by the following conditions:

\begin{itemize}
\setlength{\parskip}{0cm}
\setlength{\itemsep}{0cm}
\item $1\in I(w,P(d_{1},\ldots,d_{n}))$ iff $\langle d_{1},\ldots,d_{n}\rangle\in  V^{+}(x,P^n)$,
\item $0\in I(w,P(d_{1},\ldots,d_{n}))$ iff $\langle d_{1},\ldots,d_{n}\rangle\in  V^{-}(x,P^n)$,
\item $1\notin I(w, \bot)$ and $0\in I(w,\bot)$,
\item $1\in I(w, \Not A)$ iff $0\in I(w, A)$,
\item $0\in I(w, \Not A)$ iff $1\in I(w, A)$,
\item $1\in I(w, A\land B)$ iff $1\in I(w, A)$ and $1\in I(w, B)$,
\item $0\in I(w, A\land B)$ iff $0\in I(w, A)$ or $0\in I(w, B)$,
\item $1\in I(w, A\lor B)$ iff $1\in I(w, A)$ or $1\in I(w, B)$,
\item $0\in I(w, A\lor B)$ iff $0\in I(w, A)$ and $0\in I(w, B)$,
\item $1\in I(w, A{\to} B)$ iff for all $x\in W:$ ($w\leq x$ only if ($1\notin I(x, A)$ or $1\in I(x, B)$)),
\item $0\in I(w, A{\to} B)$ iff for all $x\in W:$ (($w\leq x$ only if $0\notin I(x, A)$) and $0\in I(w,B))$,
\item $1\in I(w, \forall{x}A)$ iff for all $x\in W:$ ($w\leq x$ only if $1\in I(x, A(d))$ for all $d\in D(x)$),
\item $0\in I(w, \forall{x}A)$ iff $0\in I(w, A(d))$ for some $d\in D(w)$,
\item $1\in I(w, \exists{x}A)$ iff $1\in I(w, A(d))$ for some $d\in D(w)$,
\item $0\in I(w, \exists{x}A)$ iff for all $x\in W:$ ($w\leq x$ only if  $0\in I(x, A(d))$ for all $d\in D(x)$).
\end{itemize}
Finally, the semantic consequence is defined as follows: $\Gamma \models_{i3} A$ iff for all {\bf QBDi3}-models $\langle W, \leq, D, V \rangle$, and for all $w\in W$: $1\in I(w, A)$ if $1\in I(w, B)$ for all $B\in \Gamma$.
\end{definition}

\begin{remark}
Let $\mathcal{L}_{int}$ be a language consisting of $\{\bot,\Not\bot,\land,\lor,\to,\forall,\exists\}$ and containing additional predicates $P',Q',$ etc. corresponding to $P,Q,$ etc. We include  $\Not\bot$ for the sake of convenience in the proof of completeness.
Then a model of intuitionistic logic \emph{plus} double negation shift, known as {\bf MH}, can be defined by restricting the language to $\mathcal{L}_{int}$, removing references to $V^{-}$, $\Not$-related clauses and $0$ in the interpretation and adding the clause that $1\in I(w,\Not\bot)$.
We shall use $\models_{mh}$ to denote the consequence.
\end{remark}

The following proposition can be established by induction on the complexity of $A$.

\begin{proposition}\label{prop.model}
In a {\bf QBDi3}-model, for all $A\in\mathsf{Sent}_{{\bf D}}$, if $1\in I(w,A)$ and $w\leq x$ then $1\in I(x,A)$.
\end{proposition}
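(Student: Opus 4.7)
The plan is to strengthen the statement and prove by simultaneous induction on the complexity of $A$ the conjunction of the two claims
\[
(1\text{-pers})\ 1\in I(w,A)\text{ and }w\leq x \Rightarrow 1\in I(x,A),\qquad (0\text{-pers})\ 0\in I(w,A)\text{ and }w\leq x \Rightarrow 0\in I(x,A).
\]
This simultaneous formulation is forced by the clauses for $\Not$: to propagate $1\in I(w,\Not A)$ upwards one needs to propagate $0\in I(w,A)$ upwards, and vice versa. The base case, $A\equiv P(d_1,\ldots,d_n)$, is immediate from the built-in monotonicity of $V^{+}$ and $V^{-}$ (reading the atomic clause with $w$ in place of $x$, as intended); the case $A\equiv \bot$ is vacuous for $(1\text{-pers})$ and trivial for $(0\text{-pers})$.

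For the propositional induction step, the cases $\Not B$, $B\land C$, and $B\lor C$ follow directly by feeding the two induction hypotheses into the corresponding biconditionals. The case $B\to C$ is the first place where the order plays a role. For $(1\text{-pers})$, if $1\in I(w,B\to C)$ and $w\leq x$, then for every $y\geq x$ we also have $w\leq y$ by transitivity of $\leq$, so the universal condition at $w$ transfers to $x$. For $(0\text{-pers})$, unpack the clause as the conjunction of (i)~$0\notin I(y,B)$ for all $y\geq w$ and (ii)~$0\in I(w,C)$; at $x\geq w$, (i) transfers by transitivity and (ii) transfers by the induction hypothesis applied to $C$.

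The quantifier steps use, in addition, the monotonicity of the domain assignment. For $1\in I(w,\forall x B)$ and for $0\in I(w,\exists x B)$, both clauses are already universally quantified over successors, so transitivity of $\leq$ suffices. For $0\in I(w,\forall x B)$ and for $1\in I(w,\exists x B)$, one has a witness $d\in D(w)$; since $w\leq x$ entails $D(w)\subseteq D(x)$, the same $d$ lies in $D(x)$, and the induction hypothesis (for $0$ or for $1$ respectively) applied to $B(d)$ transports the atomic fact at $d$ from $w$ to $x$.

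There is no genuine obstacle beyond bookkeeping; the only point that is easy to overlook is that the induction must be carried out simultaneously on $(1\text{-pers})$ and $(0\text{-pers})$, because several clauses (notably $\Not$ and $\to$) mix positive and negative values. Neither the potential-omniscience requirement on $V$ nor the existence of maximal successors is needed for this proposition; they will matter only later, in the completeness and axiomatisation arguments for \textbf{QBDi3}.
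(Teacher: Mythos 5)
Your proof is correct and follows exactly the route the paper intends (the paper only remarks that the proposition "can be established by induction on the complexity of $A$" without giving details): the essential point, which you identify, is that the induction must be carried out simultaneously for $1$-persistence and $0$-persistence because the clauses for $\Not$ and for $0\in I(w,A\to B)$ intertwine the two values. Your handling of the base cases, the transitivity arguments for the $\to$ and quantifier clauses, the use of $D(w)\subseteq D(x)$ for the witnessed clauses, and the observation that neither potential omniscience nor maximal successors is needed are all accurate.
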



\begin{proposition}\label{prop.cond}
In a {\bf QBDi3}-model, for all $w\in W$ the following statements hold.\\
\textnormal{(i)} For no $A(\vec{d})\in\mathsf{Sent}_{{\bf D}}$ s.t. $\vec{d}\in D(w)$, $1\in I(w,A(\vec{d}))$ and $0\in I(w,A(\vec{d}))$,\\
\textnormal{(ii)} For all $A(\vec{d})\in\mathsf{Sent}_{{\bf D}}$ s.t. $\vec{d}\in D(w)$, for all $x\geq w$ there exists $y\geq x:$ ($1\in I(y,A(\vec{d}))$ or $0\in I(y,A(\vec{d}))$). 
\end{proposition}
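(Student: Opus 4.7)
The plan is to prove (i) and (ii) by simultaneous induction on the complexity of $A$. The base cases are built directly into the definition of a \textbf{QBDi3}-model: for an atomic predicate, (i) comes from $V^{+}(w,P^n)\cap V^{-}(w,P^n)=\emptyset$ and (ii) is precisely the potential omniscience clause; for $\bot$, $1\notin I(w,\bot)$ gives (i) while $0\in I(w,\bot)$ gives (ii).

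In the inductive step, the cases $\Not$, $\land$, $\lor$ are routine for both parts, arguing directly from the $1$- and $0$-clauses using the induction hypothesis. The intuitionistic cases $\to$, $\forall$, $\exists$ are where the work lies, and the key lever is the stipulation that every state admits a maximal successor. For (ii), given any $x\geq w$ one picks a maximal $m\geq x$; at such an $m$ the Kripke-style $1$- and $0$-clauses of these connectives collapse to their local, classical-shaped forms, after which a short case split suffices. For instance, for $A\to B$: if $1\notin I(m,A\to B)$ then the local form gives $1\in I(m,A)$ and $1\notin I(m,B)$, so the induction hypothesis (ii) applied at $m$ forces $0\in I(m,B)$, and (i) at $m$ forces $0\notin I(m,A)$, yielding $0\in I(m,A\to B)$. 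The $\forall$ and $\exists$ cases follow the same template, applying (ii) to the witness or a would-be counterexample at $m$.

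The main obstacle is the implication case of (i). The natural argument is to assume $1\in I(w,A\to B)$ and $0\in I(w,A\to B)$, use (ii) on $A$ to produce $y\geq w$ with $1\in I(y,A)$ (the negative $\to$-clause forbids $0\in I(y,A)$), then apply the positive $\to$-clause at $y$ to obtain $1\in I(y,B)$, and finally exploit $0\in I(w,B)$ to contradict (i) on $B$ at $y$. This last step requires monotonicity of the \emph{falsity} interpretation along $\leq$, i.e.\ the analogue of Proposition~\ref{prop.model} for $0\in I(\cdot,A)$, which the paper does not state explicitly. I would fold this parallel monotonicity statement into the same simultaneous induction; its base case is the monotonicity of $V^{-}$ built into the definition of a \textbf{QBDi3}-model, and the inductive checks against the $0$-clauses of each connective and quantifier are direct. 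With this auxiliary monotonicity in hand, the implication case of (i) closes as above, and the corresponding checks for (i) at $\forall$ and $\exists$ reduce to instantiating the witnesses furnished by their $0$-clauses and invoking (i) on the matrix.
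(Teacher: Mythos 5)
Your proof is correct and follows essentially the same route as the paper's: a simultaneous induction on (i) and (ii), with the maximal-successor frame condition carrying the quantifier cases of (ii), and with the same argument for (i) at $\to$ (produce a successor where $A$ is true, extract $1\in I(y,B)$ from the positive clause, and clash with $0\in I(w,B)$ via the induction hypothesis). Two small remarks: the $0$-monotonicity you flag is indeed needed, but it is already an instance of Proposition~\ref{prop.model} applied to $\Not A$ (since $0\in I(w,A)$ iff $1\in I(w,\Not A)$), so there is no need to fold a separate monotonicity claim into the induction; and for the (ii)-case of $\to$ the paper does not invoke maximality at all, instead nesting two applications of IH~(ii) and using monotonicity, whereas your uniform use of a maximal successor for $\to$, $\forall$ and $\exists$ is equally valid and arguably tidier.
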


\begin{proof}
By simultaneous induction on the complexity of $A$. Here we shall look at the case for $\to$ 
and $\forall$.
\\
\textbf{For implication:}
(i) Suppose $1\in I(w,B\to C)$ and $0\in I(w, B\to C)$. By IH, for all $x\geq w$ there exists $y\geq x$ such that $1\in I(y,B)$ or $0\in I(y,B)$. But since $0\notin I(x, B)$ for any $x\geq w$, it has to be that for all $x\geq w$ there exists $y\geq x$ such that $1\in I(y, B)$. Thus by supposition, for all $x\geq w$ there exists $y\geq x$ such that $1\in I(y, C)$. But this contradicts with $0\in I(w, C)$; so our supposition cannot hold. (ii) We want to show
\begin{equation*}
\forall{x\geq w}\exists{y\geq x}(1\in I(y, B\to C)\text{ or }0\in I(y, B\to C)).
\end{equation*}
Let $x\geq w$. Then by IH there is $y\geq x$ s.t. $1\in I(y,B)$ or $0\in I(y,B)$. Now again by IH there is $z\geq y$ s.t. $1\in I(z,C)$ or $0\in I(z,C)$ as well as $1\in I(z,B)$ or $0\in I(z,B)$ by monotonicity. Then if $1\in I(z,C)$ or $0\in I(z, B)$, we infer $1\in I(z, B\to C)$: the latter case follows from the IH of (i) for $B$. On the other hand, if $1\in I(z, B)$ and $0\in I(z, C)$, then from the former $0\notin I(u, B)$ for all $u\geq z$. Hence $0\in I(z, B\to C)$.\\
\textbf{For universal quantifier:}
(i) If $1\in I(w,\forall{x}A)$, then $1\in I(w,A(d))$ for all $d\in D(w)$. So by IH $0\notin I(w, A(d))$ for all $d\in D(w)$. Hence $0\notin I(w,\forall{x}A)$. (ii) Given $w\in W$, by frame condition there is a $x\geq w$ that is maximal. By IH and maximality, for all $d\in D(x)$, either $1\in I(x, A(d))$ or $0\in I(x, A(d))$. Thus $1\in I(x, A(d))$ for all $d\in D(x)$ or $0\in I(x, A(d))$ for some $d\in D(x)$. So $1\in I(x, \forall{x}A)$ or $0\in I(x, \forall{x}A)$. 
\end{proof}

\subsection{Proof system}

\begin{definition}
The logic {\bf QBDi3} is a system in $\mathcal{L}_{Q}$ defined by 
\eqref{Ax1}--\eqref{deMP} (except for \eqref{Ax3}), \eqref{MP},\eqref{Gen} as well as the following axioms. (We shall use $\Gamma\vdash_{i3} A$ for the derivability relation.)

\vspace{-3mm}

\begin{minipage}{.5\textwidth}
\begin{gather}
\forall{x}\neg\neg A\to\neg\neg\forall{x}A \tag{i1} \label{i7}\\
\Not A{\to}\neg A \tag{i2} \label{i9}
\end{gather}
\end{minipage}
\
\begin{minipage}{.4\textwidth}
\begin{gather}
\neg\neg(A\lor\Not A) \tag{i3} \label{i10}
\end{gather}
\end{minipage}
\smallskip
\end{definition}

\begin{remark}
If we change the language to $\mathcal{L}_{int}$ and axioms to non-$\Not$-related ones (except \eqref{deMbot}), then we obtain the intermediate logic {\bf MH} \cite{Gabbay1972}.  We shall use $\vdash_{mh}$ to denote the derivability in {\bf MH}.  
\end{remark}

\eqref{i10} is an axiom schema known as \emph{potential omniscience}, which was investigated in \cite{hasuo2003kripke} as one of the additional axiom to {\bf N3}. In comparison, we have the following remark on the status of \eqref{i10} in {\bf QBDi3}.  

\begin{remark}\label{rem.ax}
We note that \eqref{i10} is in fact redundant in {\bf QBDi3}: consider a subsystem of {\bf QBDi3} without \eqref{i10}, and take an instance $\Not\neg A\to\neg\neg A$ of \eqref{i9}. This is equivalent to $\neg(\neg\Not A\land \neg A)$, and so to the schema for \eqref{i10}. Alternatively, we may drop \eqref{i9} instead of \eqref{i10} in obtaining an equivalent system: an instance $\neg\neg(\neg A\lor\Not\neg A)$ of \eqref{i10} is equivalent to $\neg\neg\Not A\to \neg A$, so \eqref{i9} is derivable. In spite of these observations, We posit both of the axioms because it is more convenient for the proof of the completeness theorem.
\end{remark}

\begin{remark}
It is immediate from the above remark that the addition of $A\lor\Not A$ to {\bf BDi} results in the collapse of $\neg A$ and $\Not A$, as well as the classicalisation of the positive fragment. This can be contrasted with {\bf N4}, for which the same addition makes the positive fragment of the logic classical, but not $\Not$ \cite{BaCl04}. 
\end{remark}

\begin{remark}\label{rem.ax2}
It is shown in \cite{hasuo2003kripke} that the combination of \eqref{i9} and \eqref{i10} proves \eqref{i7}. To see this, note $\forall{x}\neg\neg A$ derives $\neg\exists{x}\neg A$ and so $\neg\exists{x}\Not A$ by \eqref{i9}. This is equivalent to $\neg\Not\forall{x}A$ and thus by \eqref{i10} $\neg\neg\forall{x}A$. Therefore \eqref{i7} is also redundant. We retain it again for convenience in the completeness proof.
\end{remark}

\subsection{Completeness}
In order to establish the completeness of {\bf QBDi3}, we first introduce the notion of \emph{reduction} \cite{gurevich1977}.

\begin{definition}
We define a \emph{reduction} $f:\mathsf{Form}_{Q}\to\mathsf{Form}_{Q}$ by the following clauses:
\vspace{-1mm}
\begin{IEEEeqnarray*}{rClrClrCl}
f(P) & = & P, \hspace{5mm}& f(\Not P) & = & \Not P, \hspace{5mm}& f(\Not\exists{x}A) & = &\forall{x} f(\Not A), \\
f(\bot) & = & \bot, \hspace{5mm} & f(\Not\bot) & = & \Not\bot, \hspace{5mm} & f(\Not(A\land B)) & = & f(\Not A)\lor f(\Not B),\\
f(A\circ B) & = & f(A)\circ f(B), \hspace{5mm}&  f(\Not\Not A) & = & f(A) , \hspace{5mm}&  f(\Not(A\lor B)) & = & f(\Not A)\land f(\Not B),\\
f(Q{x}A) & = & Q{x}f(A), \hspace{5mm}& f(\Not\forall{x}A) & = &\exists{x} f(\Not A) , \hspace{5mm}& f(\Not(A\to B)) & = & \neg f(\Not A)\land f(\Not B).
\end{IEEEeqnarray*}
where $\circ\in\{\land,\lor,\to\}$ and $Q\in\{\forall,\exists\}$. We then let $f(\Gamma)=\{f(B): B\in\Gamma\}$ for a set $\Gamma$ of formulas.
\end{definition}

Recall that a \emph{prime} formula is either atomic or $\bot$. The next proposition is then readily checkable.

\begin{proposition}
For all $A\in\mathsf{Form}_{Q}$, any $B$ in a subformula $\Not B$ of $f(A)$ is a prime formula. 
\end{proposition}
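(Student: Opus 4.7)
The plan is to proceed by strong induction on the total number of logical symbols $c(A)$ in $A$, where each of $\Not,\land,\lor,\to,\forall,\exists$ contributes $1$. What I want to show, as an induction hypothesis, is exactly the stated property: every subformula of $f(A)$ of the form $\Not B$ has $B$ prime. The choice of complexity measure is the one subtle point, because clauses such as $f(\Not(A\land B)) = f(\Not A)\lor f(\Not B)$ recurse into $\Not A$ and $\Not B$, so the measure must decrease on these. With $c(\Not C)=c(C)+1$, one checks that $c(\Not A),c(\Not B)<c(\Not(A\land B))=c(A)+c(B)+2$, and similarly for the other $\Not$-pushing clauses; for $f(\Not\Not A)=f(A)$ we simply have $c(A)<c(\Not\Not A)$.

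The base cases are $A\in\{P,\Not P,\bot,\Not\bot\}$: here $f(A)=A$ and the only potential $\Not$-subformula is the outermost, whose argument is prime by assumption. For the positive connective and quantifier cases $f(B\circ C)=f(B)\circ f(C)$ and $f(QxB)=Qxf(B)$, no new $\Not$-subformula is introduced, so the induction hypothesis on the (strictly smaller) $B$ and $C$ transfers directly.

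For the clauses that handle $\Not$ applied to a complex formula, in each case the result is built out of applications $f(\Not\cdot)$ to strictly $c$-smaller formulas combined only with $\lor$, $\land$, $\to$, $\bot$, $\forall$, $\exists$. Explicitly: $f(\Not\Not B)=f(B)$ invokes IH on $B$; $f(\Not(B\land C))=f(\Not B)\lor f(\Not C)$ and $f(\Not(B\lor C))=f(\Not B)\land f(\Not C)$ invoke IH on $\Not B$ and $\Not C$; $f(\Not\forall xB)=\exists x f(\Not B)$ and $f(\Not\exists xB)=\forall x f(\Not B)$ invoke IH on $\Not B$; and $f(\Not(B\to C))=\neg f(\Not B)\land f(\Not C)$, i.e. $(f(\Not B)\to\bot)\land f(\Not C)$, again invokes IH on $\Not B$ and $\Not C$. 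In every case the constructors $\lor,\land,\to,\forall,\exists,\bot$ used to assemble the result introduce no new $\Not$-subformula, so all $\Not$-subformulas of $f(A)$ originate in the recursive calls and are therefore, by IH, applied only to primes.

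There is no real obstacle here beyond the choice of a complexity measure on which all clauses of $f$ are structurally recursive; once $c$ is fixed as above the verification is a routine walk through the defining clauses.
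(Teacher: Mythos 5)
Your proof is correct and is exactly the routine structural verification the paper has in mind (the paper omits the proof, calling the proposition ``readily checkable''). Your one genuine addition --- fixing a symbol-count measure so that the recursion of $f$ on formulas like $\Not(A\land B)$ is well-founded --- is the right way to make the induction precise, and the case analysis over the clauses of $f$ is complete.
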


We shall call a formula \emph{reduced} if it is of the form $f(A)$. We shall often write $A[\Not P_{1},\ldots, \Not P_{n}]$ to denote the occurrences of subformulas of the form $\Not B$.
If all formulas in a proof are reduced, then we shall call it a \emph{reduced proof}, and use the notation $\vdash_{r}$. Then the proposition below is shown easily.

\begin{proposition}\label{prop.equiv}
For all $A\in\mathsf{Form}_{Q}$, $\vdash_{i3} A\leftrightarrow f(A)$.
\end{proposition}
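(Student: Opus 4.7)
The plan is to proceed by induction on the complexity (number of connectives and quantifiers) of $A$. The base cases, where $A$ is atomic or $A=\bot$, are immediate since $f(A)=A$, so the equivalence $A\leftrightarrow A$ is derivable trivially.

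For the inductive step, when the outermost operator of $A$ is one of $\land,\lor,\to,\forall,\exists$, the definition of $f$ commutes with this operator (e.g.\ $f(B\land C)=f(B)\land f(C)$ and $f(\forall x B)=\forall x f(B)$), so I would conclude from the inductive hypothesis applied to the immediate subformulas together with the standard intuitionistic congruence properties for these connectives—each of which is derivable from the positive axioms \eqref{Ax1}, \eqref{Ax2}, \eqref{Ax4}--\eqref{Ax9}, \eqref{Ax11}--\eqref{Ax14} using \eqref{MP} and \eqref{Gen}.

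The principal case is $A=\Not B$, which I would treat by a subsidiary case analysis on the shape of $B$. If $B$ is prime, then $f(\Not B)=\Not B$ and there is nothing to prove. If $B=\Not C$, I would combine axiom \eqref{deMN}, which gives $\vdash_{i3}\Not\Not C\leftrightarrow C$, with the inductive hypothesis $\vdash_{i3} C\leftrightarrow f(C)$, using $f(\Not\Not C)=f(C)$. In the remaining subcases, where $B$ is a conjunction, disjunction, implication, or quantified formula, the appropriate de Morgan axiom (\eqref{deMC}, \eqref{deMD}, \eqref{deMI}, \eqref{deMU} or \eqref{deMP}, respectively) pushes $\Not$ inward, after which I would invoke the inductive hypothesis on the resulting $\Not$-headed proper subformulas and close the case via intuitionistic congruence for the outermost connective that remains. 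For example, when $B=C\to D$, \eqref{deMI} yields $\vdash_{i3}\Not(C\to D)\leftrightarrow(\neg\Not C\land\Not D)$; applying the inductive hypothesis to $\Not C$ and $\Not D$ then yields the further equivalence with $\neg f(\Not C)\land f(\Not D)$, which is precisely $f(\Not(C\to D))$.

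The main obstacle is bureaucratic rather than conceptual. I would need to verify uniformly in each $\Not B$-subcase that the inductive hypothesis is invoked only on formulas of strictly smaller complexity—easy by direct counting, since for instance $\Not C$ has one fewer connective than $\Not(C\land D)$—and I would need to have on hand a replacement-of-equivalents lemma for the positive connectives and for $\neg=(\cdot)\to\bot$, established separately but routinely. Crucially, no congruence rule for $\Not$ itself is needed, because in every subcase each use of the inductive hypothesis occurs inside a context built only from positive connectives, the outermost $\Not$ having already been eliminated by one of the de Morgan axioms.
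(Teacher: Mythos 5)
Your proof is correct and follows exactly the routine argument the paper leaves implicit (the proposition is stated there without proof as "shown easily"): induction on complexity, with the case $A=\Not B$ split by the shape of $B$ and discharged via \eqref{deMN}--\eqref{deMP} together with congruence of $\leftrightarrow$ for the positive connectives and quantifiers. Your closing observation---that every appeal to the inductive hypothesis sits inside a purely positive context, so no congruence rule for $\Not$ is needed---is precisely the point that makes the induction go through in a system lacking the contraposition rule for $\Not$.
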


\begin{proposition}\label{prop.red}
For all $\Gamma\cup\{A\}\subseteq\mathsf{Form}_{Q}$, if $\Gamma\vdash_{i3} A$ then $f(\Gamma)\vdash_{r}f(A)$. 
\end{proposition}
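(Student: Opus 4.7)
The plan is to induct on the length of the $\vdash_{i3}$-derivation, showing at each step that the $f$-image of the conclusion admits a reduced proof from $f(\Gamma)$. The key enabling fact is that $f$ commutes with every positive connective, with $\bot$, and with both quantifiers, so any derivational move that relies only on positive machinery transfers verbatim to the reduced fragment.

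Accordingly, the easy cases go as follows. Assumptions translate trivially since $f(A)\in f(\Gamma)$ whenever $A\in\Gamma$; \eqref{MP} and \eqref{Gen} apply inside $\vdash_r$ thanks to $f(A\to B)=f(A)\to f(B)$ and $f(\forall{x}A)=\forall{x}f(A)$. Among the axioms, \eqref{Ax1}, \eqref{Ax2}, \eqref{Ax4}--\eqref{Ax14} and \eqref{i7} become instances of themselves with $f$-images in place of the schematic letters, while the de Morgan family \eqref{deMbot}--\eqref{deMP} is handled uniformly because $f$ is defined so that both sides of each biconditional collapse to the same reduced formula: for example, $f(\Not(B\land C))$ and $f(\Not B\lor \Not C)$ are both literally $f(\Not B)\lor f(\Not C)$, and \eqref{deMbot} just turns into its own instance with $f(A)$.

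The substantive work concerns \eqref{i9} and \eqref{i10}, because $f(\Not A)$ is in general not $\Not f(A)$. For these I would establish, by simultaneous induction on $A$, the two auxiliary claims \textnormal{(a)} $\vdash_r f(\Not A)\to\neg f(A)$ and \textnormal{(b)} $\vdash_r \neg\neg(f(A)\lor f(\Not A))$. At primes these are direct instances of \eqref{i9} and \eqref{i10}. The cases $A=\Not B$, $A=B\land C$, $A=B\lor C$ require only routine intuitionistic reasoning on the IH: in (a) the negation is propagated into the components, while in (b) one combines the two double-negations using the intuitionistic fact $\neg\neg P\land \neg\neg Q\to \neg\neg(P\land Q)$ together with distributivity of $\land$ over $\lor$. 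The quantifier cases for (a) are similarly routine, and for (b) one uses \eqref{i7} applied to $f(A)$ to propagate $\forall$ through the $\neg\neg$.

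The main obstacle is the implication case. For (a) with $A=B\to C$, assuming the reduced antecedent $\neg f(\Not B)\land f(\Not C)$ together with $f(B)\to f(C)$, IH(a) on $C$ yields $\neg f(C)$ and hence $\neg f(B)$ by contraposition; combining $\neg f(B)$ and $\neg f(\Not B)$ with IH(b) on $B$ then produces $\bot$. For (b) with $A=B\to C$, assuming the negation of the target disjunction, one first derives $\neg f(C)$ (since $f(C)$ would deliver the left disjunct via \eqref{Ax1}), combines this with IH(b) on $C$ and the negated conjunction $\neg(\neg f(\Not B)\land f(\Not C))$ to force $\neg\neg f(\Not B)$, then uses IH(a) on $B$ to extract $\neg f(B)$, and concludes via \eqref{Ax10} that $f(B)\to f(C)$, contradicting the initial assumption. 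Since $f(\Not B)$ and $f(\Not C)$ are reduced and $\neg$ preserves reducedness, every intermediate formula stays within the reduced fragment, so the whole construction is indeed a reduced derivation.
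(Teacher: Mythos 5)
Your proof is correct and follows essentially the same strategy as the paper's: an induction on proof length whose only substantive content is the simultaneous induction on the complexity of $A$ establishing $\vdash_{r} f(\Not A)\to\neg f(A)$ and $\vdash_{r}\neg\neg(f(A)\lor f(\Not A))$, with the implication case resolved by playing the two induction hypotheses against each other and the universal-quantifier case resolved via \eqref{i7}. The intuitionistic bookkeeping in your implication and quantifier subcases is arranged in a slightly different order than in the paper, but the two derivations are equivalent.
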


\begin{proof}
By induction on the length of a proof. For cases concerning \eqref{i9} and \eqref{i10}, we show
\begin{equation*}
    \vdash_{r} f(\Not A\to\neg A) \text{ and }\vdash_{r}f(\neg\neg (A\lor\Not A))
\end{equation*}
by simultaneous induction on the complexity of $A$. When $A$ is prime, $\Not A\to \neg A$ and $\neg\neg(A\lor\Not A)$ are already reduced. When $A\equiv\Not B$, 
$f(\Not A\to \neg A)=f(B)\to\neg f(\Not B)$, which is equivalent to $f(\Not B\to\neg B)$. Hence by IH there is a reduced proof. Similarly for $f(\neg\neg(A\lor\Not A))$.\\
\textbf{For conjunction:} When $A\equiv B\land C$, we have to show:\vspace{1mm}\\
\begin{minipage}{.5\textwidth}
\begin{enumerate}
    \item[1.] $\vdash_{r}f(\Not B)\lor f(\Not C)\to\neg(f(B)\land f(C))$,
\end{enumerate}  
\end{minipage}
\
\begin{minipage}{.5\textwidth}
\begin{enumerate}
    \item[2.] $\vdash_{r}\neg\neg((f(B)\land f(C))\lor f(\Not B)\lor f(\Not C))$.
\end{enumerate}
\end{minipage}\vspace{1mm}\\
By IH, there are reduced derivations for:\\
\begin{minipage}{.5\textwidth}
\begin{itemize}
    \item[1.] $f(\Not B)\to\neg f(B)$ and $f(\Not C)\to\neg f(C)$,
\end{itemize}  
\end{minipage}
\
\begin{minipage}{.5\textwidth}
\begin{itemize}
    \item[2.] $\neg\neg(f(B)\lor f(\Not B))$ and $\neg\neg(f(C)\lor f(\Not C))$.
\end{itemize}
\end{minipage}
For (1), the formula follows from
    $\vdash_{r}(\neg f(B){\lor}\neg f(C)){\to}\neg(f(B){\land} f(C))$.
For (2), the formula follows from
    $\vdash_{r}((f(B){\lor}f(\Not B)){\land}(f(C){\lor} f(\Not C))){\to} ((f(B){\land} f(C)){\lor} f(\Not B){\lor} f(\Not C))$.
The case for $\lor$ is similar.\\
\textbf{For implication:} When $A\equiv B\to C$, we have to show:\vspace{1mm}\\
\begin{minipage}{.5\textwidth}
\begin{enumerate}
    \item[1.] $\vdash_{r}(\neg f(\Not B){\land}f(\Not C))\to\neg(f(B){\to}f(C))$.
\end{enumerate}  
\end{minipage}
\
\begin{minipage}{.5\textwidth}
\begin{enumerate}
    \item[2.] $\vdash_{r}\neg\neg((f(B){\to}f(C))\lor (\neg f(\Not B){\land}f(\Not C)))$.
\end{enumerate}
\end{minipage}\vspace{1mm}\\
For (1), we shall show 
 $\vdash_{r}(f(\Not C)\land(f(B)\to f(C)))\to\neg\neg f(\Not B)$. 
 First, by IH 
$\vdash_{r}(f(B)\to f(C))\to (f(\Not C)\to\neg f(B))$.   
Then note 
$\vdash_{r}\neg\neg (f(B)\lor f(\Not B))\to (\neg f(B)\to\neg\neg f(\Not B))$.
Hence by IH the desired formula follows.
For (2), we first note that $\neg((f(B)\to f(C))\lor (\neg f(\Not B)\land f(\Not C)))$ is equivalent to 
$\neg\neg f(B)\land\neg f(C)\land (\neg f(\Not B)\to \neg f(\Not C))$.
(Recall $\neg(A\to B)\leftrightarrow(\neg\neg A\land \neg B)$ is an intuitionistic theorem.) Now by IH, $\vdash_{r}\neg\neg f(B)\to\neg f(\Not B)$; so $\neg f(C)\land\neg f(\Not C)$ follows from the above formula. But by IH we also have $\vdash_{r}\neg f(C)\to\neg\neg f(\Not C)$. Thus:
$\vdash_{r}\neg((f(B)\to f(C))\lor (\neg f(\Not B)\land f(\Not C)))\to(\neg f(\Not C)\land\neg\neg f(\Not C))$ 
and so the desired formula follows by an intuitionistic inference.\\
\textbf{For universal quantifier:} When $A\equiv\forall{x}B$, we have to show:\vspace{1mm}\\
\begin{minipage}{.5\textwidth}
\begin{enumerate}
    \item[1.] $\vdash_{r}\exists{x}f(\Not B)\to\neg\forall{x}f(B)$.
\end{enumerate}  
\end{minipage}
\
\begin{minipage}{.5\textwidth}
\begin{enumerate}
    \item[2.] $\vdash_{r}\neg\neg(\forall{x}f(B)\lor \exists{x}f(\Not B))$.
\end{enumerate}
\end{minipage}\vspace{1mm}\\
For (1), from IH we can derive $\vdash_{r}\exists{x}f(\Not B)\to \exists{x}\neg f(B)$. Then use the fact that $\exists{x}\neg C\to\neg\forall{x}C$ is intuitionistically derivable.
For (2), by IH, \eqref{Gen} and \eqref{i7},
    $\vdash_{r}\neg\neg\forall{x}(\neg f(B)\to\exists{x}f(\Not B))$.
Hence using \eqref{Ax12} and contraposing the inside, we obtain
     $\vdash_{r}\neg\neg(\neg\exists{x}f(\Not B)\to\neg\exists{x}\neg f(B))$.   
Using the equivalence between $\neg\exists{x} C$ and $\forall{x}\neg C$ as well as \eqref{i7}, this implies
     $\vdash_{r}\neg\neg(\neg\exists{x}f(\Not B)\to\neg\neg\forall{x}f(B))$.   
Since $C\to\neg\neg D$ is equivalent to $\neg\neg(C\to D)$,
     $\vdash_{r}\neg\neg(\neg\exists{x}f(\Not B)\to\forall{x}f(B))$.   
Therefore 
     $\vdash_{r}\neg\neg(\exists{x}f(\Not B)\lor\forall{x}f(B))$, using $(\neg C\to D)\to\neg\neg(C\lor D)$.   
So the desired formula follows. The case for $\exists$ is similar.
\end{proof}

Given a set of reduced formulas $\Gamma$, we define a set of formula $E_{\Gamma}$ in $\mathcal{L}_{int}$ by:\vspace{-2mm}
\begin{IEEEeqnarray*}{rCr}
    E_{\Gamma} & := &\{\forall{\vec{x}}(P'{\to}\neg P): \Not P\text{ occurs in some }B\in\Gamma\}
      \cup\{\forall{\vec{x}}\neg\neg(P'{\lor} P): \Not P\text{ occurs in some }B\in\Gamma\}\vspace{-2mm}
\end{IEEEeqnarray*}

Given a reduced formula $A[\Not P_{1},\ldots \Not P_{n}]$, we define $A'$ to be the formula obtained by replacing the occurrences of $\Not P_{i}$ with $P'_{i}$.
We then define $\Gamma'=\{B':B\in\Gamma\}$ for a set $\Gamma$ of reduced formulas.

\begin{proposition}\label{prop.E}
Let $\Gamma\cup\{A\}\subseteq \textsf{Form}_{Q}$ be reduced. Then $\Gamma\vdash_{i3}A$ if and only if  $\Gamma',E_{\Gamma\cup\{A\}}\vdash_{mh} A'$.\vspace{-1mm}
\end{proposition}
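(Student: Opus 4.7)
The plan is to establish each direction by translating derivations and using uniform predicate substitution. Since $\Gamma$ and $A$ are already reduced, by Propositions~\ref{prop.equiv} and \ref{prop.red} the hypothesis $\Gamma \vdash_{i3} A$ yields a reduced derivation, so I may restrict attention to proofs in which every formula is reduced. The $'$-translation then acts step by step, sending each $\Not P$ with $P$ atomic to the fresh predicate $P'$ and leaving the primitive $\Not\bot$ of $\mathcal{L}_{int}$ in place.

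For the forward direction I would induct on the length of the reduced derivation. The intuitionistic axioms \eqref{Ax1}, \eqref{Ax2}, \eqref{Ax4}--\eqref{Ax14} transfer verbatim into \textbf{MH}, as do \eqref{MP} and \eqref{Gen}. Axiom \eqref{deMbot} translates to $B' \to \Not\bot$, which follows from $\Not\bot$ being a primitive theorem. Every de Morgan axiom \eqref{deMN}--\eqref{deMP} becomes a trivial equivalence, because the $f$-images of its two sides are syntactically equal. Axiom \eqref{i7} is exactly the DNS built into \textbf{MH}. The only delicate cases are \eqref{i9} and \eqref{i10}: an instance $\Not P(\vec{t}) \to \neg P(\vec{t})$ translates to $P'(\vec{t}) \to \neg P(\vec{t})$, which is obtained by universal instantiation from the corresponding member of $E_{\Gamma \cup \{A\}}$ \emph{provided} that $\Not P$ occurs somewhere in $\Gamma \cup \{A\}$, and symmetrically for \eqref{i10}.

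To handle instances of \eqref{i9} and \eqref{i10} whose atoms $\Not P$ appear only inside the derivation but not in $\Gamma \cup \{A\}$, I would first conclude $\Gamma' \cup E_S \vdash_{mh} A'$, where $S$ is the finite set of such atoms used in the proof, and then apply the uniform substitution $P' \mapsto \neg P$ for each $P \in S \setminus (\Gamma \cup \{A\})$. Since such $P'$ occur neither in $\Gamma'$ nor in $A'$, the conclusion and the members of $\Gamma'$ are unaffected, while the extra $E$-instances $\forall\vec{x}(P' \to \neg P)$ and $\forall\vec{x}\neg\neg(P \lor P')$ become $\forall\vec{x}(\neg P \to \neg P)$ and $\forall\vec{x}\neg\neg(P \lor \neg P)$, both intuitionistic theorems. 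This leaves $\Gamma' \cup E_{\Gamma \cup \{A\}} \vdash_{mh} A'$, as desired.

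For the backward direction, I would apply the uniform substitution $P' \mapsto \Not P$ to a derivation witnessing $\Gamma' \cup E_{\Gamma \cup \{A\}} \vdash_{mh} A'$. After the substitution, $\Gamma'$ becomes $\Gamma$, $A'$ becomes $A$, each $E$-instance turns into an instance of \eqref{i9} or \eqref{i10} (an axiom of \textbf{QBDi3}), $\Not\bot$ becomes an instance of \eqref{deMbot}, the DNS instances used by \textbf{MH} are derivable by Remark~\ref{rem.ax2}, and all remaining intuitionistic axioms are already present among \eqref{Ax1}--\eqref{Ax14}. The substituted derivation therefore witnesses $\Gamma \vdash_{i3} A$. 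I expect the principal obstacle to be the substitution step of the forward direction: verifying that the auxiliary $E$-instances introduced inside the proof can be discharged once the corresponding predicates are replaced by $\neg P$, while remaining careful about closure of \textbf{MH} under uniform substitution of predicates by formulas (and avoidance of variable capture when instantiating the universal closures in $E_{\Gamma \cup \{A\}}$).
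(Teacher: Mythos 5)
Your proof is correct and follows essentially the same route as the paper's: restrict to a reduced derivation, translate it step by step into \textbf{MH} with $E_{\Gamma\cup\{A\}}$ supplying the images of the \eqref{i9}/\eqref{i10} instances, and invert via the substitution $P'\mapsto\Not P$. The only (equally valid) deviation is your treatment of atoms $P$ with $\Not P$ occurring in the derivation but not in $\Gamma\cup\{A\}$ by a posterior substitution $P'\mapsto\neg P$, where the paper instead conjoins $\Not P\to\Not P$ to $A$ so that $E_{\Gamma\cup\{A\}}$ already covers them.
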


\begin{proof}
For arguing left-to-right, by proposition \ref{prop.red} we can assume that the derivation of $A$ from $\Gamma$ to be reduced.\footnote{We may assume the subformulas of the form $\Not P$ in $\Gamma\cup\{A\}$ exhaust all formulas of the form in the derivation, for otherwise we can take $A\land (\Not P\to\Not P)$ instead. A similar remark applies to the right-to-left case.} then by induction the length of a proof, we can show that {\bf MH} can replicate the derivation of {\bf BDi3}. In particular, for \eqref{i9} and \eqref{i10}, the formulas negated by $\Not$ must be prime, and we have:
\begin{quote}
     $\forall{\vec{x}}(P'\to\neg P)\vdash_{mh} (\Not P\to\neg P)[\Not P/ P']$ and
     $\forall{\vec{x}}\neg\neg(P\lor P')\vdash_{mh}\neg\neg(P\lor\Not P)[\Not P/ P']$.
\end{quote}
Similarly for the case of $\bot$. For arguing right-to-left, by replacing atomic formulas of the form $P'$ by $\Not P$ in the proof of $\Gamma',E_{\Gamma\cup\{A\}}\vdash_{mh} A'$, we obtain a proof for $\Gamma\vdash_{i3}A$.
\end{proof}

We move on to the completeness theorem after stating one more lemma that is easily checkable.

\begin{lemma}\label{lem.red}
In a {\bf BDi3}-model and $A(\vec{d})\in\mathsf{Sent}_{{\bf D}}$ s.t. $\vec{d}\in D(w)$, the next equivalences hold.

\smallskip
\begin{minipage}{.5\textwidth}
\begin{itemize}
\item[(i)] $1\in I(w,A(\vec{d}))$ iff $1\in I(w, f(A(\vec{d})))$.
\end{itemize}
\end{minipage}
\
\begin{minipage}{.5\textwidth}
\begin{itemize}
\item[(ii)] $0\in I(w,A(\vec{d}))$ iff $1\in I(w, f(\Not A(\vec{d})))$.
\end{itemize}
\end{minipage}
\end{lemma}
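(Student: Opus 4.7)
The plan is to prove (i) and (ii) simultaneously by induction on the complexity of $A$, since the two clauses feed into each other in the clauses for $\Not$ and for the falsity conditions of the other connectives. The base case, where $A$ is a prime formula, is immediate from the definition of $f$: we have $f(P)=P$, $f(\Not P)=\Not P$, $f(\bot)=\bot$, and $f(\Not\bot)=\Not\bot$, so both equivalences are trivialities.

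For the inductive step, the positive clause (i) follows routinely from the inductive hypotheses via the semantic clauses for $\land,\lor,\to,\forall,\exists$, together with the clause $f(A\circ B)=f(A)\circ f(B)$ and $f(Qx A)=Qx f(A)$. The case $A\equiv\Not B$ uses both parts of the IH: for (i), $1\in I(w,\Not B)$ iff $0\in I(w,B)$ iff (IH(ii)) $1\in I(w,f(\Not B))$; for (ii), $0\in I(w,\Not B)$ iff $1\in I(w,B)$ iff (IH(i)) $1\in I(w,f(B))=1\in I(w,f(\Not\Not B))$.

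The substantive cases are the negative clauses for $\land,\lor,\forall,\exists$ and especially $\to$. For $\land$, using the semantic clause we have $0\in I(w,B\land C)$ iff $0\in I(w,B)$ or $0\in I(w,C)$ iff (IH(ii)) $1\in I(w,f(\Not B))$ or $1\in I(w,f(\Not C))$ iff $1\in I(w,f(\Not B)\lor f(\Not C))=1\in I(w,f(\Not(B\land C)))$, and $\lor$ is dual. For $\forall$ and $\exists$, the clause $0\in I(w,\forall x B)$ iff $0\in I(w,B(d))$ for some $d\in D(w)$ translates via IH(ii) into $1\in I(w,f(\Not B(d)))$ for some $d$, which is $1\in I(w,\exists x f(\Not B(x)))=1\in I(w,f(\Not\forall x B))$, and similarly for $\exists$.

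The implication case is the main obstacle, since $\neg$ appears in $f(\Not(B\to C))=\neg f(\Not B)\land f(\Not C)$ and we must match it with the semantic clause for $0\in I(w,B\to C)$. Spelling it out: $0\in I(w,B\to C)$ iff for all $x\geq w$, $0\notin I(x,B)$, and $0\in I(w,C)$. By IH(ii) applied at every $x\geq w$, the first conjunct is equivalent to $1\notin I(x,f(\Not B))$ for all $x\geq w$, which by the semantic clause for $\to$ (with $\bot$) is exactly $1\in I(w,f(\Not B)\to\bot)=1\in I(w,\neg f(\Not B))$. By IH(ii), the second conjunct is $1\in I(w,f(\Not C))$. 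Conjoining yields $1\in I(w,\neg f(\Not B)\land f(\Not C))=1\in I(w,f(\Not(B\to C)))$, as required. One small subtlety worth noting is that applying IH(ii) at an arbitrary $x\geq w$ requires $\vec{d}\in D(x)$, which is guaranteed by the monotonicity condition on $D$.
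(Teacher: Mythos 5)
Your proof is correct and is precisely the simultaneous induction on the complexity of $A$ that the paper has in mind (the paper states the lemma as ``easily checkable'' and omits the argument). All the cases, including the delicate negated-implication case where $\neg f(\Not B)$ is matched against the universally quantified clause for $0\in I(w,B\to C)$, and the remark about monotonicity of $D$ licensing the use of the inductive hypothesis at successor states, are handled correctly.
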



\begin{theorem}[Soundness and completeness of {\bf QBDi3}]\label{thm.comp}
For all $\Gamma\cup\{A\}\in\mathsf{Sent}_{Q}$, $\Gamma\vdash_{i3} A$ iff $\Gamma\models_{i3} A$.
\end{theorem}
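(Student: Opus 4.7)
The plan is two-pronged: establish soundness by direct induction on derivations, and deduce completeness by reducing to the completeness of {\bf MH}, exploiting the reduction apparatus developed in Propositions~\ref{prop.red}, \ref{prop.E} and Lemma~\ref{lem.red}.

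For soundness, most axioms and rules are routine using Proposition~\ref{prop.model} together with a companion $0$-monotonicity statement established by the same induction. I will only highlight the characteristic new cases. For \eqref{i9} ($\Not A\to \neg A$): from $1\in I(w,\Not A)$ we get $0\in I(w,A)$, whence $0\in I(x,A)$ for every $x\geq w$ by $0$-monotonicity, and then $1\notin I(x,A)$ by Proposition~\ref{prop.cond}(i); this is precisely $1\in I(w,\neg A)$. For \eqref{i10} ($\neg\neg(A\lor\Not A)$): given $x\geq w$, Proposition~\ref{prop.cond}(ii) delivers some $y\geq x$ with $1\in I(y,A)$ or $0\in I(y,A)$, either of which yields $1\in I(y,A\lor\Not A)$. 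For \eqref{i7} (the DNS): given $x\geq w$, pick a maximal $y\geq x$; the hypothesis, pushed to $y$ by monotonicity, together with maximality of $y$ forces $1\in I(y,A(d))$ for every $d\in D(y)$, whence $1\in I(y,\forall x\,A)$. The de Morgan axioms, notably \eqref{deMI}, are verified directly against the falsity clauses.

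For completeness, suppose $\Gamma\not\vdash_{i3} A$. By Propositions~\ref{prop.equiv} and~\ref{prop.red} we may assume without loss of generality that $\Gamma\cup\{A\}$ is reduced; then by Proposition~\ref{prop.E}, $\Gamma',E_{\Gamma\cup\{A\}}\not\vdash_{mh} A'$. The completeness of {\bf MH} with respect to Kripke frames with maximal successors \cite{Gabbay1972} then supplies an {\bf MH}-countermodel $\mathcal{M}=\langle W,\leq,D,V\rangle$ and a state $w_0$ at which all formulas in $\Gamma'\cup E_{\Gamma\cup\{A\}}$ receive $1$ while $A'$ does not. I define a {\bf QBDi3}-model on the same underlying frame by setting $V^{+}(w,P):=V(w,P)$ and $V^{-}(w,P):=V(w,P')$ for each relevant atomic predicate $P$ (and $V^\pm$ arbitrarily on others). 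Monotonicity of $V^\pm$ is inherited from $\mathcal{M}$; the consistency condition $V^{+}\cap V^{-}=\emptyset$ follows from the validity of $\forall\vec{x}(P'\to\neg P)\in E_{\Gamma\cup\{A\}}$; and potential omniscience of $V$ follows from that of $\forall\vec{x}\neg\neg(P\lor P')$.

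A routine induction on reduced sentences $B$, using Lemma~\ref{lem.red} and the fact that in a reduced formula $\Not$ appears only immediately before primes, then shows that $1\in I(w,B)$ in the new model iff $1\in I(w,B')$ in $\mathcal{M}$. Instantiating at $w_0$ yields $\Gamma\not\models_{i3} A$. The main obstacle is keeping the syntactic and semantic sides of the reduction function in synchrony; this is precisely what Propositions~\ref{prop.red} and~\ref{prop.E} and Lemma~\ref{lem.red} are designed to handle, and the remaining work is to assemble their outputs into a single construction and verify the frame and valuation conditions of a {\bf QBDi3}-model.
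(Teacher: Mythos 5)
Your proposal is correct and follows essentially the same route as the paper: soundness by induction on derivations using the monotonicity and consistency/omniscience propositions, and completeness by passing through the reduction $f$, Proposition~\ref{prop.E}, the strong completeness of {\bf MH}, and the translation of the {\bf MH}-countermodel into a {\bf QBDi3}-model whose frame and valuation conditions are secured by $E_{\Gamma\cup\{A\}}$. The only cosmetic difference is bookkeeping: Lemma~\ref{lem.red} is not needed for the induction matching $B$ with $B'$ on reduced formulas, but only afterwards to transfer the countermodel from $f(\Gamma)\cup\{f(A)\}$ back to the original $\Gamma\cup\{A\}$.
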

\vspace{-3mm}
\begin{proof}
The soundness follows by induction on the length of derivation (by substituting free variables with elements in the relevant domain). In particular, the cases for \eqref{i9}, \eqref{i10} follow from Proposition \ref{prop.model}.\\
\indent For completeness, we show by contraposition. Assume $\Gamma\nvdash_{i3}A$. Then by Proposition \ref{prop.equiv}, $f(\Gamma)\nvdash_{i3}f(A)$, and so $f(\Gamma)',E_{f(\Gamma\cup\{A\})}\nvdash_{mh} f(A)'$ by Proposition \ref{prop.E}. Hence by the strong completeness for {\bf MH} \cite{Ardeshir2014,Gabbay1972},  $f(\Gamma)',E_{f(\Gamma\cup\{A\})}\not\models_{mh}f(A)'$. Consequently, there is a model $\langle W,\leq, D, V\rangle$ of {\bf MH} such that for some $w\in W$, $1\in I(w, B)$ for all $B\in f(\Gamma)'\cup E_{f(\Gamma\cup\{A\})}$ but $1\notin I(w,f(A)')$ for some $x\in W$.\\
\indent Define a {\bf QBDi3}-model $\langle W,\leq,D, V_{2}\rangle$ such that for $\vec{d}\in D(w)$:
\begin{quote}
     $\vec{d}\in V_{2}^{+}(w,P)$ iff $\vec{d}\in V^{+}(w,P)$,
    and $\vec{d}\in V_{2}^{-}(w,P)$ iff $\vec{d}\in V^{+}(w,P')$.
\end{quote}
We have to check that $\langle W,\leq,D,V_{2}\rangle$ is indeed a {\bf QBDi3}-model. 
If  $\vec{d}\in V_{2}^{+}(w,P)$ and  $\vec{d}\in V_{2}^{-}(w,P)$, then  $\vec{d}\in V^{+}(w,P)$ and $\vec{d}\in V^{+}(w,P')$.
But then $1\in I(w, B)$ for all $B$ in $\langle W,\leq,D,V\rangle$, a contradiction. Next, since $1\in I(w, \neg\neg (P(\vec{d})\lor P'(\vec{d})))$ for $\vec{d}\in D(w)$, for any $w\in W$: $\forall{x\geq w}\exists{y\geq x}(\vec{d}\in V^{+}(y, P)\cup V^{+}(y,P')))$.
Hence for any $\vec{d}\in D(w)$, we have $\forall{x\geq w}\exists{y\geq x}(\vec{d}\in V_{2}^{+}(y, P)\cup V_{2}^{-}(y,P))$ in $\langle W,\leq,D,V_{2}\rangle$.\\
\indent We shall now observe that 
    $1\in I(w, B')\text{ iff } 1\in I_{2}(w, B)$
for any closed subformulas of $f(\Gamma\cup\{A\})$ with constants in $D(w)$. In particular, when $B\equiv \Not C$, $C\equiv P(\vec{d})$ for some $P$ which occurs in $E_{f(\Gamma\cup\{A\})}$. Then $1\in I(w,(\Not P(\vec{d}))')$ iff $0\in I_{2}(w, P(\vec{d}))$ iff $1\in I_{2}(w,\Not P(\vec{d}))$.\\
\indent It now follows that $1\in I_{2}(x, f(B))$ for all $f(B)\in f(\Gamma)$ but $1\notin I_{2}(x, f(A))$.  Therefore from Lemma \ref{lem.red}, we infer that $1\in I_{2}(x, B)$ for all $B\in \Gamma$ but $1\notin I_{2}(x, A)$. Hence $\Gamma\not\models_{i3}A$.
\end{proof}

\subsection{Constructive properties}

Constructivity for {\bf BDi} has been observed in \cite{kamide2021modal} by establishing the disjunction and constructible falsity properties. These properties constitute an important difference from {\bf HYPE}, for which they fail, as Odintsov and Wansing \cite{odintsovWansing2021} observed through Drobyshevich's formula \cite{drobyshevich2015}. On the other hand, for {\bf MH}, the disjunction and existence properties have been established by Komori \cite{komori1983}. It is therefore of interest to check these properties for {\bf QBDi3}. Here, we adopt an approach via \emph{Aczel slash} \cite{aczel1968saturated}.\vspace{-0.5mm}

\begin{definition}
For $A\in\mathsf{Sent}_{Q}$. we define its \emph{slashes} $|^{+}A$ and $|^{-}A$ by the following clause.

\smallskip
\begin{minipage}{0.4\textwidth}
\begin{itemize}
\setlength{\parskip}{0cm}
\setlength{\itemsep}{0cm}
\item $|^{+}P(t_{1},\ldots t_{n})\text{ iff }\vdash_{i3}P(t_{1},\ldots t_{n}).$
\item $|^{-}P(t_{1},\ldots t_{n}) \text{ iff } \vdash_{i3}\Not P(t_{1},\ldots t_{n}).$
\item \hspace{-0.22em}$\not{|^{+}}\bot.$
\item $|^{-}\bot.$
\item $|^{+}\Not A \text{ iff } |^{-}A.$
\item $|^{-}\Not A \text{ iff } |^{+}A.$
\item $|^{+}A\land B \text{ iff } |^{+}A\text{ and }|^{+}B.$
\item $|^{-}A\land B \text{ iff } |^{-}A\text{ or }|^{-}B.$
\end{itemize}
\end{minipage}
\
\begin{minipage}{0.6\textwidth}
\begin{itemize}
\setlength{\parskip}{0cm}
\setlength{\itemsep}{0cm}
\item $|^{+}A\lor B \text{ iff } |^{+}A\text{ or }|^{+}B.$
\item $|^{-}A\lor B \text{ iff } |^{-}A\text{ and }|^{-}B.$
\item $|^{+}A\to B \text{ iff } \vdash_{i3}A\to B\text{ and }(|^{+}A\text{ implies }|^{+}B).$
\item $|^{-}A\to B \text{ iff } \vdash_{i3}\neg\Not A\text{ and }|^{-}B.$
\item $|^{+}\forall{x}A \text{ iff } \vdash_{i3}\forall{x}A\text{ and }(|^{+}A(c)\text{ for all }c\in\mathsf{Con}).$
\item $|^{-}\forall{x}A \text{ iff } |^{-}A(c)\text{ for some }c\in\mathsf{Con}.$
\item $|^{+}\exists{x}A \text{ iff } |^{+}A(c)\text{ for some }c\in\mathsf{Con}.$
\item $|^{-}\exists{x}A \text{ iff } \vdash_{i3}\Not\exists{x}A\text{ and }(|^{-}A(c)\text{ for all }c\in\mathsf{Con}).$
\end{itemize}
\end{minipage}
\end{definition}

\vspace{1mm}
We proceed to show a couple of lemmas. The first one has a handy consequence that $|^{+}\neg A$ iff $\vdash_{i3}\neg A$.

\begin{lemma}\label{lem.slash1}
Let $A\in\mathsf{Sent}_{Q}$. Then $|^{+}A$ implies $\vdash_{i3}A$.    
\end{lemma}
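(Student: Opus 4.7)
The plan is to establish a strengthened, simultaneous claim by induction on the complexity of $A\in\mathsf{Sent}_Q$: both that $|^{+}A$ implies $\vdash_{i3}A$ and that $|^{-}A$ implies $\vdash_{i3}\Not A$. Treating the two polarities together is unavoidable because the slash clauses for $\Not A$ explicitly swap positive and negative, and a positive claim about $\Not A$ can only be discharged using information about $|^{-}A$.

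For the positive half, the atomic, implication, and universal clauses have $\vdash_{i3}$-derivability built into their very definition, so these cases are immediate. Conjunction, disjunction, and the existential reduce to the corresponding introduction principles---\eqref{Ax6} together with \eqref{Ax1} and \eqref{MP}, \eqref{Ax7}/\eqref{Ax8}, and \eqref{Ax11}, respectively---applied to the inductive hypothesis. The case $A\equiv\Not B$ appeals to the simultaneous inductive hypothesis on $|^{-}B$.

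For the negative half, each compound case reduces via the relevant de Morgan axiom: from $|^{-}(B\land C)$ we get $\Not B$ or $\Not C$ by IH, hence $\Not B\lor\Not C$ by \eqref{Ax7}/\eqref{Ax8}, and then $\Not(B\land C)$ by \eqref{deMC}; $|^{-}(B\lor C)$ proceeds similarly via \eqref{deMD}; $|^{-}(B\to C)$ goes through $\neg\Not B\land\Not C$ and \eqref{deMI}, with the first conjunct handed to us directly by the slash clause; $|^{-}\forall{x}B$ passes through $\exists{x}\Not B$ via \eqref{Ax11} and then \eqref{deMU}; $|^{-}\exists{x}B$ is given outright by the slash clause; and $|^{-}\Not B$ uses the inductive hypothesis on $|^{+}B$ combined with \eqref{deMN}. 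The atomic negative case is again by definition.

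The only mildly delicate base case is $|^{-}\bot$, which demands an actual derivation of $\Not\bot$. Instantiating \eqref{deMbot} with $A\equiv\bot\to\bot$ (itself derivable from \eqref{Ax1} and \eqref{Ax2} via \eqref{MP}) and applying \eqref{MP} once more yields $\Not\bot$. I do not foresee any serious obstacle: the argument is essentially a bookkeeping exercise that pairs each slash clause with its matching (de Morgan) axiom, and the main discipline is simply setting up the simultaneous induction correctly so that the positive and negative halves feed each other at the $\Not$ clauses.
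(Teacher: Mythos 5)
Your proof is correct and takes essentially the same approach as the paper: your simultaneous induction on the two claims $|^{+}A\Rightarrow\;\vdash_{i3}A$ and $|^{-}A\Rightarrow\;\vdash_{i3}\Not A$ is just a cleaner packaging of the paper's single induction with a secondary case split on the complexity of the negand when $A$ is strongly negated. The individual cases (in particular $\Not(B\to C)$ via \eqref{deMI}, and the base case $\Not\bot$ via \eqref{deMbot}, which you rightly flag as the one spot needing an actual derivation) are handled exactly as in the paper's proof.
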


\begin{proof}
By induction on the complexity of $A$. When $A$ is strongly negated, we further divide into cases depending on the complexity of the negand. As an example, consider the case $A\equiv \Not(B\to C)$. Assume $|^{+}\Not(B\to C)$: then  $|^{-}(B\to C)$ and so $\vdash_{i3}\neg\Not B$ and $|^{-}C$. The latter implies $|^{+}\Not C$, which by IH implies $\vdash_{i3}\Not C$. Thus $\vdash_{i3}\Not(B\to C)$ follows from \eqref{deMI}.
\end{proof}

\indent Before stating the next lemma, we expand the (+ve) slash to $\text{Form}_{Q}$, by stipulating $|^{+}A$ if $|^{+}A'$ for any $A'$ obtained from $A$ by substituting its free variables by constants.

\begin{lemma}\label{lem.slash2}
Let $A\in\mathsf{Sent}_{Q}$. Then $\vdash_{i3}A$ implies $|^{+}A$.
\end{lemma}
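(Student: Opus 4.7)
The plan is to proceed by induction on the length of a derivation $\vdash_{i3} A$, showing that every axiom satisfies $|^{+}$ (under the convention, just introduced, that $|^{+} B$ for open $B$ means $|^{+} B'$ for every closing substitution $B'$) and that both \eqref{MP} and \eqref{Gen} preserve $|^{+}$. The rule cases are short: for \eqref{MP}, the second conjunct of the $|^{+}(A\to B)$-clause delivers $|^{+} B$ from $|^{+} A$ and $|^{+}(A\to B)$ directly; for \eqref{Gen}, $|^{+} A$ for every closing substitution yields $|^{+} A(c)$ for all $c\in\mathsf{Con}$, and combined with $\vdash_{i3}\forall{x}A$ (from Lemma~\ref{lem.slash1} together with \eqref{Gen}) this gives $|^{+}\forall{x}A$.

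The axiom cases split into three groups. The positive axioms \eqref{Ax1}--\eqref{Ax9} and the quantifier axioms \eqref{Ax11}--\eqref{Ax14} reduce to intuitionistic bookkeeping: after unfolding the $|^{+}$-clauses, each $\vdash_{i3}$-obligation is discharged by the axiom itself (possibly combined with intuitionistic manipulations), and each implicational obligation follows by tracing $|^{+}$ through the subformulas, using Lemma~\ref{lem.slash1} when a slash has to be converted back to derivability; \eqref{Ax10} is vacuous since $|^{+}\bot$ never holds. For the de Morgan axioms \eqref{deMbot}--\eqref{deMP}, I would match the $|^{+}$-clauses on the strongly negated side against the $|^{-}$-clauses on the other, exploiting $|^{+}\Not A \Leftrightarrow |^{-} A$. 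For instance, \eqref{deMI} boils down to the observation that $|^{+}\Not(A\to B)$ unfolds (via $|^{-}(A\to B)$) to $\vdash_{i3}\neg\Not A$ and $|^{-} B$, which is exactly the unfolding of $|^{+}(\neg\Not A\land \Not B)$.

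The axioms \eqref{i9} and \eqref{i10} — and hence \eqref{i7} by Remark~\ref{rem.ax2} — are the delicate ones, because of the interaction between $|^{+}$ and $\neg$. The key observation is that $|^{+}\neg C$ amounts to $\vdash_{i3}\neg C$ together with the side condition that $|^{+} C$ fails, since $|^{+}\bot$ is always false. For \eqref{i9}, from $|^{+}\Not A$ I obtain $|^{-} A$, hence $\vdash_{i3}\Not A$ by Lemma~\ref{lem.slash1}, hence $\vdash_{i3}\neg A$ via \eqref{i9}; the side condition that $|^{+} A$ fails then follows from the consistency of {\bf BDi3} (available from the soundness half of Theorem~\ref{thm.comp}), since $|^{+} A$ would give $\vdash_{i3} A$, which together with $\vdash_{i3}\neg A$ would collapse the logic. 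For \eqref{i10}, the goal $|^{+}\neg\neg(A\lor\Not A)$ requires that $|^{+}\neg(A\lor\Not A)$ fails, which reduces similarly to $\not\vdash_{i3}\neg(A\lor\Not A)$ via the same consistency argument together with $\vdash_{i3}\neg\neg(A\lor\Not A)$.

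The main obstacle will be the careful casework for the de Morgan axioms — where the $|^{-}$-clauses for compound strongly negated formulas have to be correctly matched against the $|^{+}$-clauses — combined with the twin consistency invocations for \eqref{i9} and \eqref{i10}; verifying that the open-formula substitution convention behaves properly throughout the quantifier axioms and \eqref{Gen} is where attention to detail is most needed.
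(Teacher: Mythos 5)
Your proposal matches the paper's proof in all essentials: the same induction on the length of a derivation, the same reliance on Lemma~\ref{lem.slash1} and on the observation that $|^{+}\neg C$ reduces to $\vdash_{i3}\neg C$ (the paper's ``handy consequence''), and the same matching of the $|^{-}$-clauses against the de Morgan axioms; your explicit appeal to consistency (obtainable from soundness) to discharge the side condition that $|^{+}C$ fails makes precise what the paper leaves implicit, and your direct treatment of \eqref{i10} versus the paper's choice to drop both \eqref{i7} and \eqref{i10} via Remarks~\ref{rem.ax} and \ref{rem.ax2} is an immaterial difference. The one slip is in the \eqref{Gen} case: Lemma~\ref{lem.slash1} cannot yield $\vdash_{i3}\forall{x}A$ from $|^{+}A(c)$ for all $c\in\mathsf{Con}$ (that would be an $\omega$-rule), but this does not matter, since the premise $\vdash_{i3}A$ is already available from the derivation under analysis and \eqref{Gen} itself then supplies $\vdash_{i3}\forall{x}A$.
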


\begin{proof}
 By induction on the length of proof, using the expanded notion of slash. Here we treat a couple of cases as examples. For cases of intuitionistic axioms and rules, see e.g. \cite[Theorem 3.5.9]{TrDa88i}. Moreover, in view of Remark \ref{rem.ax}, \ref{rem.ax2}, it suffices to consider a simpler axiomatisation of {\bf QBDi3} without \eqref{i7}, \eqref{i10}. \\
\indent For \eqref{deMI}, we need to show $|^{+}\Not(A\to B)\to(\neg\Not A\land \Not B)$ and $|^{+}(\neg\Not A\land \Not B)\to \Not(A\to B)$ for $A,B\in\mathsf{Sent}_{Q}$. Consider the former. By definition, it is equivalent to:
\begin{equation*}
\vdash_{i3}\Not(A\to B)\to(\neg\Not A\land \Not B)\text{ and } 
(|^{+}\Not(A\to B)\text{ implies }|^{+}\neg\Not A\land \Not B).
\end{equation*}
The former conjunct is one direction of \eqref{deMI}; the latter conjunct follows immediately from the handy consequence we noted above. The other direction similarly follows.\\
\indent For \eqref{i9}, we must show $|^{+}\Not A\to \neg A$ for $A\in\mathsf{Sent}_{Q}$. This follows since $|^{+}\Not A$ implies $\vdash_{i3}\Not A$ and thus $\vdash_{i3}\neg A$ by the previous lemma and \eqref{i9}: now use again the handy consequence to conclude $|^{+}\neg A$.
\end{proof}

We obtain disjunction, existence and constructible falsity property for {\bf QBDi3} as consequences.

\begin{theorem}\label{thm.cons}
Let $A,B\in\mathsf{Sent}_{Q}$. Then:\\

\begin{minipage}{.44\textwidth}
\begin{itemize}
\setlength{\parskip}{0cm}
\setlength{\itemsep}{0cm}
  \item[(i)] $\vdash_{i3} A{\lor}B$ implies $\vdash_{i3} A$ or $\vdash_{i3} B$.
  \item[(ii)] $\vdash_{i3}\exists{x}A$ then $\vdash_{i3}A(c)$ for some $c{\in}\mathsf{Con}$.
\end{itemize}
\end{minipage}
\begin{minipage}{.49\textwidth}
\begin{itemize}
\setlength{\parskip}{0cm}
\setlength{\itemsep}{0cm}
  \item[(iii)] $\vdash_{i3} \Not(A{\land}B)$ implies $\vdash_{i3} \Not A$ or $\vdash_{i3} \Not B$.
  \item[(iv)] $\vdash_{i3}\Not\forall{x}A$ then $\vdash_{i3}\Not A(c)$ for some $c{\in}\mathsf{Con}$.
\end{itemize}
\end{minipage}
\end{theorem}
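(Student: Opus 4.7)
The plan is to combine the two slash lemmas (\ref{lem.slash1} and \ref{lem.slash2}) to sandwich derivability between the positive slash, and then read off each of the four properties directly from the recursive clauses in the definition of $|^{+}$ and $|^{-}$. Concretely, Lemma~\ref{lem.slash2} gives $\vdash_{i3}C$ implies $|^{+}C$, and Lemma~\ref{lem.slash1} gives the converse $|^{+}C$ implies $\vdash_{i3}C$. So the entire argument reduces to chasing the slash clauses for $\lor$, $\exists$, $\Not(-\land-)$ and $\Not\forall{x}(-)$.

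For (i), I would start from $\vdash_{i3} A\lor B$, invoke Lemma~\ref{lem.slash2} to get $|^{+}A\lor B$, which by the defining clause means $|^{+}A$ or $|^{+}B$; Lemma~\ref{lem.slash1} then yields $\vdash_{i3}A$ or $\vdash_{i3}B$. Part (ii) is analogous: $\vdash_{i3}\exists{x}A$ gives $|^{+}\exists{x}A$, hence $|^{+}A(c)$ for some $c\in\mathsf{Con}$ by the clause for $\exists$, hence $\vdash_{i3}A(c)$ by Lemma~\ref{lem.slash1}.

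For (iii) and (iv), I would take one extra step through the negative slash. For (iii): from $\vdash_{i3}\Not(A\land B)$ obtain $|^{+}\Not(A\land B)$, which by the clause $|^{+}\Not C$ iff $|^{-}C$ gives $|^{-}(A\land B)$, which unpacks to $|^{-}A$ or $|^{-}B$, and the $\Not$-clause flips these back to $|^{+}\Not A$ or $|^{+}\Not B$; a final application of Lemma~\ref{lem.slash1} yields $\vdash_{i3}\Not A$ or $\vdash_{i3}\Not B$. Part (iv) follows by the same pattern using the clause $|^{-}\forall{x}A$ iff $|^{-}A(c)$ for some $c\in\mathsf{Con}$.

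I do not expect a genuine obstacle here, because all the substantive work has already been absorbed into Lemmas~\ref{lem.slash1} and~\ref{lem.slash2}; the only point requiring a moment of care is that the clauses for $|^{+}\Not A$ and $|^{-}\Not A$ have to be iterated to swap $\Not$ with $|^{+}$/$|^{-}$ the right number of times, but this is entirely mechanical. Writing the argument as four symmetric one-line derivations makes the use of each lemma transparent.
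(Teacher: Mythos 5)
Your proof is correct and follows essentially the same route as the paper: both arguments sandwich derivability between the positive slash via Lemmas~\ref{lem.slash2} and~\ref{lem.slash1} and then read (i) and (ii) off the slash clauses for $\lor$ and $\exists$. The only cosmetic difference is in (iii) and (iv), where the paper reduces them to (i) and (ii) using the De Morgan axioms \eqref{deMC} and \eqref{deMU}, whereas you chase the negative slash clauses directly; both variants are one-liners and equally valid.
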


\begin{proof}
(i) If $\vdash_{i3}A\lor B$, then by Lemma \ref{lem.slash2} $|^{+}A\lor B$, and so either $|^{+}A$ or $|^{+}B$. Thus either $\vdash_{i3} A$ or $\vdash_{i3}B$ by Lemma \ref{lem.slash2}. (ii) is shown analogously. (iii) and (iv) then follow form (i) and (ii), respectively.
\end{proof}

\begin{remark}
Despite Theorem \ref{thm.cons}, {\bf QBDi3} may be unacceptable to some constructivists, as the double negation shift contradicts principles of some schools of constructivism\footnote{For an analysis of the double negation shift and its variants in the mathematical setting, see e.g. \cite{fujiwarakohlenbach}.} \cite[Corollary 6.3.4.2, 6.6.4]{TrDa88i}.
\end{remark}

\section{Comparisons with systems related to {\bf BDi3}}
\subsection{Two-state case as a four-valued logic}
Let $\mathcal{L}$ be $\mathcal{L}_{Q}$ without quantifiers. Consider the extension of propositional {\bf BDi3} with an axiom schema:
\begin{gather}
    A\lor(A\to B)\lor\neg B. \label{AxG} \tag{AxG}
\end{gather}
For intuitionistic logic, the addition of \eqref{AxG} results in a system called {\bf G3}, which is sound and strongly complete with respect to the class of linear Kripke frames with $\leq 2$ elements: cf. \cite{Chagrov1997,ono2019proof,robles2014simple}. The semantics can be represented by the three-valued truth tables below.
\begin{center} 
{\scriptsize
$
\begin{tabular}{c|cccc}
$A \land B$ & $\mathbf{1}$ & $\mathbf{i}$  & $\mathbf{0}$ \\ \hline
$\mathbf{1}$ & $\mathbf{1}$ & $\mathbf{i}$ & $\mathbf{0}$\\
$\mathbf{i}$ & $\mathbf{i}$ & $\mathbf{i}$ & $\mathbf{0}$\\
$\mathbf{0}$ & $\mathbf{0}$ & $\mathbf{0}$  & $\mathbf{0}$\\
\end{tabular}
\quad
\begin{tabular}{c|cccc}
$A \lor B$ & $\mathbf{1}$ & $\mathbf{i}$  & $\mathbf{0}$ \\ \hline
$\mathbf{1}$ & $\mathbf{1}$ & $\mathbf{1}$ & $\mathbf{1}$\\
$\mathbf{i}$ & $\mathbf{1}$ & $\mathbf{i}$ & $\mathbf{i}$\\
$\mathbf{0}$ & $\mathbf{1}$ & $\mathbf{i}$ & $\mathbf{0}$
\end{tabular}
\quad
\begin{tabular}{c|cccc}
$A {\to} B$ & $\mathbf{1}$ & $\mathbf{i}$  & $\mathbf{0}$ \\ \hline
$\mathbf{1}$ & $\mathbf{1}$ & $\mathbf{i}$ & $\mathbf{0}$\\
$\mathbf{i}$ & $\mathbf{1}$ & $\mathbf{1}$ & $\mathbf{0}$\\
$\mathbf{0}$ & $\mathbf{1}$ & $\mathbf{1}$ & $\mathbf{1}$
\end{tabular}
\quad
\begin{tabular}{c|c}
${\neg} A$ & \\ \hline
$\mathbf{1}$ & $\mathbf{0}$\\
$\mathbf{i}$ & $\mathbf{0}$\\
$\mathbf{0}$ & $\mathbf{1}$
\end{tabular}
$
}
\end{center}
We shall use $\vdash_{i3g3}$  for the consequence in {\bf BDi3}+\eqref{AxG}, and $\models_{i3g3}$ for the semantical consequence of the class of linear propositional {\bf BDi3}-frames with $\leq 2$ elements. Then using the strong completeness of {\bf G3}, we can show the completeness theorem by arguing analogously to the previous subsection.

\begin{theorem}
For all $\Gamma\cup\{A\}\subseteq\mathsf{Form}$, $\Gamma\vdash_{i3g3} A$ iff $\Gamma\models_{i3g3} A$.
\end{theorem}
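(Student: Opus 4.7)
The plan is to adapt the completeness argument in the proof of Theorem~\ref{thm.comp}, substituting {\bf G3} for {\bf MH} throughout and restricting to the propositional fragment. For soundness, all {\bf BDi3}-axioms and rules were already verified; it remains to check \eqref{AxG} on linear {\bf BDi3}-frames of cardinality at most two. Fix such a $w$ and assume $1\notin I(w,A)$ and $1\notin I(w,\neg B)$. The latter yields some $y\geq w$ with $1\in I(y,B)$. If $w$ is maximal, $A\to B$ is vacuously forced at $w$ since $1\notin I(w,A)$. Otherwise let $w'$ be the unique strict successor; if $1\notin I(w',A)$, $A\to B$ is again forced, and if $1\in I(w',A)$ then $y\in\{w,w'\}$ so by persistence $1\in I(w',B)$, yielding $A\to B$ at $w$ once more. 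Hence $1\in I(w,A\lor(A\to B)\lor\neg B)$ in every case.

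For completeness, assume $\Gamma\nvdash_{i3g3}A$. The propositional versions of Propositions~\ref{prop.equiv} and \ref{prop.red} carry over unchanged, since the reduction function $f$ acts trivially on the $\Not$-free axiom \eqref{AxG} and the non-quantifier clauses in Proposition~\ref{prop.red}'s proof do not rely on quantifier machinery. Thus it suffices to work with reduced $\Gamma$ and $A$. Next establish the analog of Proposition~\ref{prop.E}: for reduced $\Gamma, A$, $\Gamma\vdash_{i3g3}A$ iff $\Gamma',E_{\Gamma\cup\{A\}}\vdash_{g3}A'$. The proof proceeds by induction on derivation length exactly as before, with the novel case of \eqref{AxG} translating to an instance of \eqref{AxG} in {\bf G3} via the replacement $\Not P_i\mapsto P_i'$, so the additional axiom causes no difficulty.

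Applying the strong completeness of {\bf G3} with respect to linear Kripke frames with at most two worlds yields a countermodel $\langle W,\leq,V\rangle$ and some $w\in W$ forcing $f(\Gamma)'\cup E_{f(\Gamma\cup\{A\})}$ but not $f(A)'$. Define $V_{2}$ on the same frame by $V_{2}^{+}(x,P):=V^{+}(x,P)$ and $V_{2}^{-}(x,P):=V^{+}(x,P')$, precisely as in the proof of Theorem~\ref{thm.comp}. Disjointness of extension and anti-extension follows from the instances of $P'\to\neg P$ in $E$, potential omniscience follows from the instances of $\neg\neg(P\lor P')$, and the maximal-successor condition holds trivially on a frame of size at most two. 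An analog of Lemma~\ref{lem.red} then delivers $1\in I_{2}(w,B)$ for all $B\in\Gamma$ while $1\notin I_{2}(w,A)$, whence $\Gamma\not\models_{i3g3}A$. The main obstacle, as in the previous subsection, is ensuring that $V_{2}$ inherits the {\bf BDi3}-constraints from the intermediate-logic countermodel; but this is exactly what $E$ was engineered to guarantee, so no new ideas are required.
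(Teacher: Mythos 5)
Your proposal is correct and follows exactly the route the paper intends: the paper's own ``proof'' is just the remark that one argues analogously to the {\bf QBDi3} completeness proof with {\bf G3} in place of {\bf MH}, and you have filled in precisely those details (soundness of \eqref{AxG} on linear frames of size at most two, the reduction and translation lemmas, and the transfer of the {\bf G3} countermodel, noting that the maximal-successor and potential-omniscience conditions come for free from the frame size and from $E$). No discrepancies with the paper's approach.
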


Given this correspondence, it is of interest to ask what kind of truth tables can characterize this extension. We claim that the following 4-valued truth tables are adequate ($\bot$ has the constant value $\mathbf{0}$).
\begin{center} 
{\scriptsize
$
\begin{tabular}{c|cccc}
$A \land B$ & $\mathbf{1}$ & $\mathbf{i}$  & $\mathbf{j}$ & $\mathbf{0}$ \\ \hline
$\mathbf{1}$ & $\mathbf{1}$ & $\mathbf{i}$ & $\mathbf{j}$ & $\mathbf{0}$\\
$\mathbf{i}$ & $\mathbf{i}$ & $\mathbf{i}$ & $\mathbf{j}$ & $\mathbf{0}$\\
$\mathbf{j}$ & $\mathbf{j}$ & $\mathbf{j}$ & $\mathbf{j}$ & $\mathbf{0}$\\
$\mathbf{0}$ & $\mathbf{0}$ & $\mathbf{0}$ & $\mathbf{0}$ & $\mathbf{0}$\\
\end{tabular}
\quad
\begin{tabular}{c|cccc}
$A \lor B$ & $\mathbf{1}$ & $\mathbf{i}$  & $\mathbf{j}$ & $\mathbf{0}$ \\ \hline
$\mathbf{1}$ & $\mathbf{1}$ & $\mathbf{1}$ & $\mathbf{1}$ & $\mathbf{1}$\\
$\mathbf{i}$ & $\mathbf{1}$ & $\mathbf{i}$ & $\mathbf{i}$ & $\mathbf{i}$\\
$\mathbf{j}$ & $\mathbf{1}$ & $\mathbf{i}$ & $\mathbf{j}$ & $\mathbf{j}$\\
$\mathbf{0}$ & $\mathbf{1}$ & $\mathbf{i}$ & $\mathbf{j}$ & $\mathbf{0}$\\
\end{tabular}
\quad
\begin{tabular}{c|cccc}
$A {\to} B$ & $\mathbf{1}$ & $\mathbf{i}$  & $\mathbf{j}$ & $\mathbf{0}$ \\ \hline
$\mathbf{1}$ & $\mathbf{1}$ & $\mathbf{i}$ & $\mathbf{j}$ & $\mathbf{0}$\\
$\mathbf{i}$ & $\mathbf{1}$ & $\mathbf{1}$ & $\mathbf{j}$ & $\mathbf{0}$\\
$\mathbf{j}$ & $\mathbf{1}$ & $\mathbf{1}$ & $\mathbf{1}$ & $\mathbf{1}$\\
$\mathbf{0}$ & $\mathbf{1}$ & $\mathbf{1}$ & $\mathbf{1}$ & $\mathbf{1}$\\
\end{tabular}
\quad
\begin{tabular}{c|c}
${\neg} A$ & \\ \hline
$\mathbf{1}$ & $\mathbf{0}$\\
$\mathbf{i}$ & $\mathbf{0}$\\
$\mathbf{j}$ & $\mathbf{1}$\\
$\mathbf{0}$ & $\mathbf{1}$\\
\end{tabular}
\quad
\begin{tabular}{c|c}
${\Not} A$ & \\ \hline
$\mathbf{1}$ & $\mathbf{0}$\\
$\mathbf{i}$ & $\mathbf{j}$\\
$\mathbf{j}$ & $\mathbf{i}$\\
$\mathbf{0}$ & $\mathbf{1}$\\
\end{tabular}
$
}
\end{center}

Let $V_{4}: \mathsf{Prop}\longrightarrow\{{\bf 1},{\bf i},{\bf j},{\bf 0}\}$ be a four-valued assignment and $I_{4}$ be the interpretation extending it according to the tables. We write $\Gamma\models_{4} A$ if $I_{4}(B)={\bf 1}$ for all $B\in \Gamma$ implies $I_{4}(A)={\bf 1}$ for all interpretations.

\begin{theorem}
For all $\Gamma\cup\{A\}\subseteq\mathsf{Form}$, if $\Gamma\models_{i3g3} A$ then $\Gamma\models_{4} A$.
\end{theorem}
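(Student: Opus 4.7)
The plan is to prove the contrapositive: if $\Gamma \not\models_{4} A$, exhibit a two-state linear \textbf{BDi3}-model that witnesses $\Gamma \not\models_{i3g3} A$. The key conceptual step is to identify the four values $\{\mathbf{1}, \mathbf{i}, \mathbf{j}, \mathbf{0}\}$ with the four admissible support patterns of a propositional atom over a two-element linear frame $\{w_0, w_1\}$ with $w_0 < w_1$ (and $w_1$ maximal, as required by potential omniscience). Writing $V^{\pm}$ for extension/anti-extension, the constraints of consistency ($V^{+} \cap V^{-} = \emptyset$) and monotonicity leave exactly four possibilities for each atom $p$: (i) $1 \in I(w_0,p)$ and $1 \in I(w_1,p)$, coding $\mathbf{1}$; (ii) neither $1$ nor $0$ is in $I(w_0,p)$ and $1 \in I(w_1,p)$, coding $\mathbf{i}$; (iii) neither $1$ nor $0$ in $I(w_0,p)$ and $0 \in I(w_1,p)$, coding $\mathbf{j}$; (iv) $0 \in I(w_0,p)$ and $0 \in I(w_1,p)$, coding $\mathbf{0}$.

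Given a four-valued counter-assignment $V_{4}$ witnessing $\Gamma \not\models_{4} A$, I would build a model $M = \langle \{w_0, w_1\}, \leq, V \rangle$ by setting $V^{+}(w_i, p)$ and $V^{-}(w_i, p)$ atom by atom according to $V_{4}(p)$ via the dictionary above. The frame is linear with a maximal element, and every atom satisfies consistency, monotonicity, and potential omniscience, so $M$ is a bona fide \textbf{BDi3}-model of size at most two.

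The core of the argument is an induction on the complexity of $B$ showing that $I_{4}(B)$ equals the value $v$ iff the pair $(I(w_0, B), I(w_1, B))$ exhibits the support pattern coding $v$. I would verify this clause by clause against the displayed tables: for $\Not$ the support-swap matches the table directly; for $\neg$ one computes $A \to \bot$ in \textbf{BDi3} with $\bot$ always supporting $0$; for $\land$ and $\lor$ the tables track pointwise support (with monotonicity automatically respected); and for $\to$ one unpacks the \textbf{BDi3} clause, whose positive component quantifies over $\{w_0, w_1\}$ while the negative component combines the same quantification with support of $0$ for $C$ at $w_0$. The case for $\to$ will be the main obstacle, since each of the sixteen input pairs must be checked against the table; the subtleties are entries such as $\mathbf{i} \to \mathbf{i} = \mathbf{1}$ and $\mathbf{j} \to \mathbf{j} = \mathbf{1}$, where the value at the maximal state carries the positive clause, and $\mathbf{i} \to \mathbf{j} = \mathbf{j}$, where the negative clause at $w_1$ is forced by $0$-support of $C$ combined with absence of $0$-support of $B$.

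Once this inductive lemma is established, the conclusion is immediate: $V_{4}(B) = \mathbf{1}$ for all $B \in \Gamma$ together with $V_{4}(A) \neq \mathbf{1}$ translates into $1 \in I(w_0, B)$ for every $B \in \Gamma$ while $1 \notin I(w_0, A)$, yielding $\Gamma \not\models_{i3g3} A$ via the two-state \textbf{BDi3}-model $M$.
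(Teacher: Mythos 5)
Your proposal is correct and takes essentially the same route as the paper: you build the identical two-state linear model from a four-valued assignment (with the same dictionary $\mathbf{1}\mapsto(\{1\},\{1\})$, $\mathbf{i}\mapsto(\emptyset,\{1\})$, $\mathbf{j}\mapsto(\emptyset,\{0\})$, $\mathbf{0}\mapsto(\{0\},\{0\})$) and rest the argument on the same inductive correspondence lemma between $I_{4}$ and the support patterns at the two states. The only cosmetic differences are that you phrase the argument contrapositively and note (correctly) that the frame conditions force exactly these four patterns on atoms, though strictly it is potential omniscience, not just consistency and monotonicity, that excludes the fifth pattern $(\emptyset,\emptyset)$.
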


\begin{proof}
For the left-to-right direction, let $V_{4}$ be an assignment s.t. $I_{4}(B)={\bf 1}$ for all $B\in\Gamma$. We define a linear {\bf BDi3}-model with $2$ elements $\langle \{x,y\},\{(x,x),(x,y),(y,y)\}, V\rangle$ by:

\vspace{-1mm}

{\small
\begin{equation*}
V(x,p):=
\begin{cases}
\{ 1\} &\text{ if }V_{4}(p)= {\bf 1}.\\
\{ 0\} &\text{ if }V_{4}(p)= {\bf 0}.\\
\emptyset &\text{ otherwise. }
\end{cases}
\hspace{3mm}
V(y,p):=
\begin{cases}
\{ 1\} &\text{ if }V_{4}(p)= {\bf 1}\text{ or }{\bf i}.\\
\{ 0\} &\text{ otherwise. }\\
\end{cases}
\end{equation*}
}
We can then show that $V$ is monotone and potentially omniscient, and for all $A\in\mathsf{Form}$:

\smallskip
\begin{minipage}{.5\textwidth}
\begin{itemize}
\setlength{\parskip}{0cm}
\setlength{\itemsep}{0cm}
    \item $I(x,A)=\{1\}\Longleftrightarrow I_{4}(A)={\bf 1}$.
    \item $I(x,A)=\{0\}\Longleftrightarrow I_{4}(A)={\bf 0}$.
    \item $I(x,A)=\emptyset\Longleftrightarrow I_{4}(A)={\bf i}\text{ or }{\bf j}$.
\end{itemize}
\end{minipage}
\
\begin{minipage}{.5\textwidth}
\begin{itemize}
\setlength{\parskip}{0cm}
\setlength{\itemsep}{0cm}
    \item $I(y,A)=\{1\}\Longleftrightarrow I_{4}(A)={\bf 1}\text{ or }{\bf i}$.
    \item $I(y,A)=\{0\}\Longleftrightarrow I_{4}(A)={\bf j}\text{ or }{\bf 0}$. 
\end{itemize}
\end{minipage}

\smallskip 
\noindent Now by assumption, $1\in I(x, B)$ for all $B\in\Gamma$ and so $1\in I(x,A)$; hence $I_{4}(A)={\bf 1}$. Thus  $\Gamma\models_{4} A$.
\end{proof}

\begin{theorem}
For all $\Gamma\cup\{A\}\subseteq\mathsf{Form}$, if $\Gamma\models_{4} A$ then $\Gamma\models_{i3g3} A$.
\end{theorem}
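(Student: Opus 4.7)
The plan is to prove the contrapositive. Suppose $\Gamma \not\models_{i3g3} A$, so there is a linear {\bf BDi3}-model $\langle W, \leq, V\rangle$ with $|W|\le 2$ and a state $w\in W$ at which every $B \in \Gamma$ is true but $A$ is not. Let $y$ be the top element of $W$ and $x \le y$ the bottom (possibly $x=y$); by Proposition \ref{prop.cond} together with the maximality of $y$, $I(y, C)\in\{\{1\},\{0\}\}$ for every formula $C$. I first handle the main subcase in which the counterexample lies at $x$; this subsumes the one-element case via $x=y$.

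Mirroring the construction from the previous theorem in reverse, I define a four-valued assignment by reading off the joint behaviour of $V$ at $x$ and $y$:
\[
V_4(p) \;=\; \begin{cases} \mathbf{1} & \text{if } 1 \in V(x, p), \\ \mathbf{0} & \text{if } 0 \in V(x, p), \\ \mathbf{i} & \text{if } V(x, p) = \emptyset \text{ and } 1 \in V(y, p), \\ \mathbf{j} & \text{if } V(x, p) = \emptyset \text{ and } 0 \in V(y, p). \end{cases}
\]
Consistency at $x$, monotonicity from $x$ to $y$, and potential omniscience at $y$ together make the cases exhaustive and mutually exclusive. The core of the argument is then to prove, by induction on the complexity of $C$, the converses of the five equivalences established in the previous theorem: $I(x,C)=\{1\}$ iff $I_4(C)=\mathbf{1}$, $I(x,C)=\{0\}$ iff $I_4(C)=\mathbf{0}$, $I(x,C)=\emptyset$ iff $I_4(C)\in\{\mathbf{i},\mathbf{j}\}$, $I(y,C)=\{1\}$ iff $I_4(C)\in\{\mathbf{1},\mathbf{i}\}$, and $I(y,C)=\{0\}$ iff $I_4(C)\in\{\mathbf{j},\mathbf{0}\}$. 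Reading these at $x$ yields $I_4(B)=\mathbf{1}$ for all $B\in\Gamma$ but $I_4(A)\neq \mathbf{1}$, whence $\Gamma \not\models_4 A$. The residual subcase in which $w=y$ with $x\neq y$ reduces to the main one by restricting to the submodel on $\{y\}$, which is again a {\bf BDi3}-model thanks to the maximality and potential omniscience of $y$; only the classical values $\mathbf{1},\mathbf{0}$ then arise.

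The step I expect to be the main obstacle is the inductive clause for $\to$, because the Kripke clauses for both $1\in I(w,A\to B)$ and $0\in I(w,A\to B)$ quantify over every successor of $w$, so each of the sixteen entries of the four-valued implication table has to be matched against the corresponding pair of Kripke behaviours at $x$ and $y$. The key auxiliary fact in this verification is that $I(y,\cdot)$ always lies in $\{\{1\},\{0\}\}$, so evaluation at the top state is effectively classical; this is precisely what aligns the joint distribution of designated and antidesignated values at $x$ and $y$ with the four values $\mathbf{1},\mathbf{i},\mathbf{j},\mathbf{0}$ and, for example, forces $\mathbf{i}\to\mathbf{j}$ to be $\mathbf{j}$ in both presentations. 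The clauses for $\Not$, $\land$ and $\lor$ are routine inspections of the corresponding tables by comparison.
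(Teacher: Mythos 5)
Your proposal is correct and is essentially the paper's own argument run in contrapositive form: the same assignment $V_4$ read off from the pair of values at the bottom and top states, the same five equivalences proved by induction, and the same reliance on the top state being classical (by maximality and potential omniscience). The only difference is cosmetic — the paper argues directly and sets $w=x$ without comment, whereas you explicitly reduce the case where the relevant state is the top element to the one-element case, which is a slightly more careful rendering of the same proof.
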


\begin{proof}
Let $\langle W,\leq, V\rangle$ be a linear {\bf BDi3}-model with $\leq 2$ elements such that $1\in I(w, B)$ for all $B\in\Gamma$. As the case when $|W|=1$ is immediate, we turn our attention to the case when $|W|=2$. Let $W=\{x,y\}$, $\leq =\{(x,x),(x,y),(y,y)\}$ and $w=x$. We define an assignment $V_{4}$ by the following clauses.

{\small
\begin{equation*}
    V_{4}(p)= \begin{cases}
              {\bf 1} & \text{ if } V(x,p)=\{1\}.\\
              {\bf i} & \text{ if } V(x,p)=\emptyset\text{ and }V(y,p)=\{1\}.\\
              {\bf j} & \text{ if } V(x,p)=\emptyset\text{ and }V(y,p)=\{0\}.\\
              {\bf 0} & \text{ if } V(x,p)=\{0\}.\\
    \end{cases}
\end{equation*}
}
This can be checked to generalise to all $A\in\mathsf{Form}$.
Now by assumption, $I_{4}(B)={\bf 1}$ for all $B\in\Gamma$ and thus $I_{4}(A)={\bf 1}$. Hence $I(x, A)=\{1\}$. Therefore $\models_{g3i3} A$.
\end{proof}

Therefore we conclude that {\bf BDi3}+\eqref{AxG} is sound and complete with respect to the above tables:

\begin{corollary}
For all $\Gamma\cup\{A\}\subseteq\mathsf{Form}$, $\Gamma\vdash_{i3g3} A$ iff $\Gamma\models_{4} A$.
\end{corollary}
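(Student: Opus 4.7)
The plan is essentially to chain together the three results established immediately before the corollary, since it is presented as their direct combination. More precisely, I would argue as follows.

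First, I would appeal to the completeness theorem for \textbf{BDi3}+\eqref{AxG} (the one obtained by the Proposition-\ref{prop.E}-style reduction strategy used for \textbf{QBDi3}, now reusing the strong completeness of \textbf{G3} instead of \textbf{MH}), which gives
\[
\Gamma \vdash_{i3g3} A \quad \Longleftrightarrow \quad \Gamma \models_{i3g3} A.
\]
Then, I would glue together the two theorems sandwiching the corollary (the one showing $\Gamma\models_{i3g3}A \Rightarrow \Gamma\models_{4}A$ via the assignment $V_4 \mapsto V$ on a two-point linear \textbf{BDi3}-frame, and the one showing the converse via the assignment $V \mapsto V_4$ on the same shape of frame), yielding
\[
\Gamma \models_{i3g3} A \quad \Longleftrightarrow \quad \Gamma \models_{4} A.
\]
The corollary then follows by transitivity of ``iff''.

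The only step that might require any actual care is confirming that the strong completeness theorem invoked in the first step really applies as used — but since \textbf{G3} is known to be sound and strongly complete with respect to linear Kripke frames of size at most two, and the reduction to \textbf{MH} in Proposition~\ref{prop.E} adapts mutatis mutandis to \textbf{G3} once \eqref{AxG} is added (the reduction function $f$ and the auxiliary set $E_\Gamma$ do not interact with \eqref{AxG} in any new way), there is no real obstacle here. I would therefore simply cite the three results in order, with no further calculation.
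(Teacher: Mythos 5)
Your proposal is correct and matches the paper's own route exactly: the corollary is stated as an immediate consequence of the completeness theorem for $\vdash_{i3g3}$ with respect to $\models_{i3g3}$ together with the two sandwiching theorems relating $\models_{i3g3}$ and $\models_{4}$, with no further argument given. Your side remark about checking that the reduction strategy transfers from \textbf{MH} to \textbf{G3} is the right point of care, and it is exactly what the paper disposes of by saying the completeness theorem is shown ``by arguing analogously to the previous subsection.''
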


\subsection{Some subsystems of {\bf BDi3}}
Here we make some observations regarding the predicate expansions of other systems related to {\bf QBDi3}.\\
\indent Firstly, we consider the predicate version {\bf QBDi} of the system {\bf BDi}. A major difference of {\bf QBDi} from {\bf QBDi3} is that there is no need to posit the double negation shift axiom.

\begin{definition}
A {\bf QBDi}-model is a quadruple $\langle W,\leq, D, V\rangle$  defined like that of {\bf QBDi3}, except that:
\begin{itemize}
    \item The condition about the existence of maximal elements is dropped.\vspace{-1ex}
    \item The condition $V^{+}(w,P^n)\cap V^{-}(w,P^n)=\emptyset$ and the assumption of potential omniscience are dropped.
\end{itemize}
We shall use $\models_{i}$ in denoting the semantic consequence.
\end{definition}

\begin{definition}
The logic {\bf QBDi} is a system in $\mathcal{L}_{Q}$ defined by removing \eqref{i7},\eqref{i9},\eqref{i10} from the axiomatisation of {\bf QBDi3}. We shall use $\vdash_{i}$ to denote the derivability in {\bf QBDi}.
\end{definition}

\begin{theorem}
For all $\Gamma\cup\{A\}\subseteq\mathsf{Sent}_{Q}$, $\Gamma\vdash_{i} A$ iff $\Gamma\models_{i} A$.
\end{theorem}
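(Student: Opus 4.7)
The plan is to imitate the proof of Theorem~\ref{thm.comp} with two simplifications: the reduction target is intuitionistic predicate logic $\mathrm{IQC}$ in place of $\mathbf{MH}$, and the auxiliary set $E_{\Gamma}$ all but disappears. Soundness is by induction on the length of a derivation; the analogue of Proposition~\ref{prop.model} still holds since it does not rely on the frame conditions that were dropped, and the De~Morgan axioms \eqref{deMbot}--\eqref{deMP} together with the intuitionistic axioms are verified exactly as in the $\mathbf{QBDi3}$ case.

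For completeness I carry over the syntactic reduction $f$ verbatim; the analogues of Propositions~\ref{prop.equiv} and~\ref{prop.red} become easier than their $\mathbf{QBDi3}$ counterparts, because the absence of \eqref{i7}, \eqref{i9}, \eqref{i10} removes the simultaneous induction on the negand and leaves only the De~Morgan axioms to inspect (they already reduce to reduced proofs). Given reduced $\Gamma \cup \{A\}$, I replace every subformula $\Not P$ with $P$ prime by a fresh predicate $P'$ of matching arity, obtaining $\Gamma', A'$. The only $\Not$-theorem $\mathbf{QBDi}$ forces unconditionally is $\Not\bot$ (from \eqref{deMbot}); under the translation this becomes the fresh atom $(\Not\bot)'$, which I collect into a singleton $E$. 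The key claim is
\[
\Gamma \vdash_{i} A \iff \Gamma',\, E \vdash_{\mathrm{IQC}} A',
\]
proved by direct translation of reduced derivations: since $\mathbf{QBDi}$ imposes no axiom relating $\Not P$ to $P$ (beyond $\Not\Not P \Equiv P$, absorbed into $f$), $\Not P$ behaves like an unconstrained fresh atom.

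Assuming $\Gamma \nvdash_i A$, strong completeness of $\mathrm{IQC}$ with increasing domains yields a Kripke model $\langle W, \leq, D, V\rangle$ and $w \in W$ forcing $\Gamma' \cup E$ but not $A'$. I then define $V_2$ by
\[
V_2^{+}(x, P) := V(x, P), \qquad V_2^{-}(x, P) := V(x, P').
\]
Monotonicity of $V_2^{\pm}$ is inherited from $V$, and since $\mathbf{QBDi}$-models impose neither consistency, potential omniscience, nor the existence of maximal successors, $\langle W, \leq, D, V_2\rangle$ is automatically a $\mathbf{QBDi}$-model. An induction on the complexity of reduced subformulas of $f(\Gamma \cup \{A\})$ then shows $1 \in I(x, B') \iff 1 \in I_2(x, B)$; the only interesting case is $B \equiv \Not P(\vec{d})$, where both sides unfold to $\vec{d} \in V(x, P')$. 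Combined with the analogue of Lemma~\ref{lem.red}, this delivers $\Gamma \not\models_i A$.

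The main obstacle, as in the $\mathbf{QBDi3}$ argument, is just the bookkeeping inside the reduction lemma; but dropping \eqref{i9}, \eqref{i10}, and the potential omniscience side condition deletes precisely the subtle portions of Proposition~\ref{prop.red} and of the $E_{\Gamma}$ construction, so what remains is a handful of mechanical cases.
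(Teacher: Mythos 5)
Your proposal is correct and takes essentially the same route as the paper, which establishes this theorem by rerunning the reduction argument of Theorem~\ref{thm.comp} with the axioms (i1)--(i3) and the extra frame/valuation conditions dropped, and with strong completeness of intuitionistic predicate logic replacing that of {\bf MH}. The only cosmetic difference is that you retain a singleton $E=\{(\Not\bot)'\}$, whereas the paper dispenses with $E_{\Gamma\cup\{A\}}$ entirely by building $\Not\bot$ into the target language $\mathcal{L}_{int}$ as a primitive forced at every state; the two devices are interchangeable.
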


\begin{proof}
The argument is analogous to Theorem \ref{thm.comp}. We do not need an analogue of Proposition \ref{prop.model}, and the proof of the analogue of Proposition \ref{prop.red} is much simplified. For the analogue of Proposition \ref{prop.E} and elsewhere, we do not need to appeal to $E_{\Gamma\cup\{A\}}$. In the proof of the theorem itself, we appeal to the strong completeness of intuitionistic logic, rather than of {\bf MH}.
\end{proof}

Constructive properties of {\bf QBDi} can be observed as well,  by arguing analogously to Theorem  \ref{thm.cons}.
Next, we consider the predicate expansions {\bf QDN3} and {\bf QDN4} of the systems {\bf DN3} and {\bf DN4} \cite{Niki2023}. {\bf QDN4} is defined from {\bf QBDi} by replacing \eqref{deMI}  with
    $\Not(A\to B)\leftrightarrow(\neg\neg A\land\Not B)$. 
A Kripke model for {\bf QDN4} is obtained from  that of {\bf QBDi} by changing the clauses for $0\in I(w, A{\to}B)$ to:
\begin{itemize}
\item $0\in I(w, A{\to} B)$ iff for all $x\geq w$ there is $y\geq x(1\in I(y, A))$) and $0\in I(w,B)$.
\end{itemize}
{\bf QDN3} and its models are defined by imposing \eqref{i9} and the condition $V^{+}(w,P^n)\cap V^{-}(w,P^n)=\emptyset$.

Let us use subscripts $_{d3}$ and $_{d4}$ for the syntactic and semantic consequences in these systems. Then we obtain the following completeness theorems (cf. also \cite{Niki2023} for the propositional case.)

\begin{theorem}
Let $k\in\{3,4\}$. For all $\Gamma\cup\{A\}\subseteq\mathsf{Sent}_{Q}$, $\Gamma\vdash_{dk} A$ iff $\Gamma\models_{dk} A$.
\end{theorem}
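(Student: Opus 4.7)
The plan is to follow the three-step template of Theorem~\ref{thm.comp}: reduce $\Not$ down to prime formulas, translate into a purely positive base logic, and transfer a Kripke countermodel back. Since neither {\bf QDN3} nor {\bf QDN4} contains \eqref{i7} or \eqref{i10}, the base logic of translation will be intuitionistic predicate logic ({\bf IQC}) rather than {\bf MH}, and no potential omniscience has to be preserved.

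First, redefine the reduction so that the clause for strongly negated implication becomes
\[
f_{dk}(\Not(A\to B))=\neg\neg f_{dk}(A)\land f_{dk}(\Not B),
\]
matching the revised de Morgan axiom $\Not(A\to B)\leftrightarrow(\neg\neg A\land\Not B)$; all other clauses of $f$ carry over verbatim. The analogues of Propositions~\ref{prop.equiv} and~\ref{prop.red} then go through: the revised de Morgan axiom reduces to an intuitionistic tautology modulo unfolding $\neg A$ as $A\to\bot$. For $k=3$, the induction on complexity in the case of \eqref{i9} still closes under the new reduction; for $A\equiv B\to C$ we need $(\neg\neg f(B)\land f(\Not C))\to\neg(f(B)\to f(C))$, which follows from the IH $f(\Not C)\to\neg f(C)$ together with the intuitionistic fact $\neg\neg B,\neg C\vdash\neg(B\to C)$.

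Second, set up the analogue of Proposition~\ref{prop.E}. For $k=4$, the auxiliary set is empty and the correspondence reads $\Gamma\vdash_{d4}A$ iff $\Gamma'\vdash_{\text{IQC}}A'$. For $k=3$, let $E^{d3}_\Gamma:=\{\forall\vec{x}(P'\to\neg P):\Not P\text{ occurs in some }B\in\Gamma\}$, retaining only the first half of $E_\Gamma$ from the proof of Proposition~\ref{prop.E}; this encodes \eqref{i9} at the level of prime formulas and mirrors the consistency condition on models. The translation argument in both directions is a routine adaptation of Proposition~\ref{prop.E}, using strong completeness of {\bf IQC} in place of that of {\bf MH}.

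Finally, given $\Gamma\nvdash_{dk}A$, chaining the reductions yields $\Gamma',E^{dk}\nvdash_{\text{IQC}}A'$; strong completeness of {\bf IQC} produces an intuitionistic model $\langle W,\leq,D,V\rangle$ and a state $w$ forcing $\Gamma'\cup E^{dk}$ but not $A'$. Define a {\bf QDN}$k$-model $\langle W,\leq,D,V_2\rangle$ by $V_2^+(w,P):=V^+(w,P)$ and $V_2^-(w,P):=V^+(w,P')$; monotonicity of $V_2^\pm$ is inherited from $V$, and for $k=3$ the disjointness $V_2^+(w,P)\cap V_2^-(w,P)=\emptyset$ follows from $\forall\vec{x}(P'\to\neg P)$. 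The key semantic alignment is that $1\in I(w,\neg\neg A)$ in an intuitionistic model unfolds to ``for all $x\geq w$ there exists $y\geq x$ with $1\in I(y,A)$'', which matches the pattern in the new falsity clause $0\in I_2(w,A\to B)$; this allows the analogue of Lemma~\ref{lem.red} to close by induction. The one non-routine step is precisely this implication case of the lemma, where the conjoined $\neg\neg$-witness structure has to be lined up against the forcing conditions at successors. Soundness is then a routine induction on derivations.
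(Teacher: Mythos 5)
Your proposal is correct and follows essentially the same route as the paper: modify the reduction clause to $f(\Not(A\to B))=\neg\neg f(A)\land f(\Not B)$, translate into intuitionistic predicate logic in place of {\bf MH}, and for $k=3$ retain only the $\forall\vec{x}(P'\to\neg P)$ half of $E_\Gamma$ before transferring the Kripke countermodel back. You in fact supply more detail than the paper's sketch, in particular the observation that the intuitionistic forcing of $\neg\neg$ aligns with the revised falsity clause for implication, and the verification that the \eqref{i9} induction closes without \eqref{i10}.
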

\begin{proof}
For {\bf QDN4}, the argument is the same as the case for {\bf QBDi}. The only major difference is that we have to use the clause $f(\Not(A\to B))=\neg\neg f(A)\land f(\Not B)$ for reduction. For {\bf QDN3}, the outline is almost identical to the case of {\bf QBDi3}. Aside from the difference in reduction, and using the completeness of intuitionistic logic rather than of {\bf MH}, we take $E_{\Gamma}$ to be $\{\forall{\vec{x}}(P'\to\neg P): \exists{B\in\Gamma}(\Not P\text{ occurs in }B)\}$.
\end{proof}

\begin{remark}
A motivation for {\bf DN3} and {\bf DN4} is to brings strong and intuitionistic negation closer: $\Not (A\to B)\rightarrow A$ holds in {\bf N4}, but its analogue does not hold w.r.t. $\neg$. This may appear too demanding for a refutation of implication, and is thus avoided in the systems of \cite{Niki2023}. This approach is also more thoroughly pursued in \emph{quasi-nelson algebras} \cite{rivieccio2019}: notice a similarity with the clause for $\to$ in \emph{nucleus-based quasi-Nelson twist-algebra} \cite{rivieccio2022}, where $\Box$ is a \emph{nucleus} (a generalisation of double negation):

\vspace{-1mm}
\begin{itemize}
\item $\langle a_{1}, a_{2}\rangle\to\langle b_{1},b_{2}\rangle=\langle a_{1}\to b_{1},\Box a_{1}\land b_{2}\rangle$.
\end{itemize}
\end{remark}

Constructive properties of {\bf QDN3} and {\bf QDN4} can be checked again analogously to Theorem  \ref{thm.cons}, by changing the clause for $|^{-}A\to B$ by $\vdash\neg\neg A\text{ and }|^{-}B$.
Next, we observe that {\bf QBDi3} and {\bf DN3} are related in an essential way; indeed, the difference is exactly the potential omniscience axiom.

\begin{proposition}
${\bf QBDi3}={\bf QDN3}+\eqref{i10}$.
\end{proposition}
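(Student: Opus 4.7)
The plan is to show the two inclusions between $\mathbf{QBDi3}$ and $\mathbf{QDN3}+\eqref{i10}$ as axiomatic systems, and the crucial lemma driving both directions is that, in the presence of \eqref{i9} and \eqref{i10}, the formulas $\neg\Not A$ and $\neg\neg A$ are provably equivalent. This collapses the only syntactic discrepancy between the two systems, namely the different formulation of the de Morgan law for implication: $\mathbf{QBDi3}$ uses $\Not(A\to B)\Equiv(\neg\Not A\land\Not B)$, while $\mathbf{QDN3}$ uses $\Not(A\to B)\leftrightarrow(\neg\neg A\land\Not B)$.

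First I would establish $\vdash\neg\Not A\leftrightarrow\neg\neg A$ from \eqref{i9} and \eqref{i10}. The direction $\neg\neg A\to\neg\Not A$ is just the contrapositive of \eqref{i9}. For $\neg\Not A\to\neg\neg A$, note that \eqref{i10} gives $\neg\neg(A\lor\Not A)$, which is intuitionistically equivalent to $\neg(\neg A\land\neg\Not A)$; hence assuming $\neg\Not A$ and then $\neg A$ produces a contradiction, yielding $\neg\Not A\to\neg\neg A$.

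For $\mathbf{QBDi3}\subseteq\mathbf{QDN3}+\eqref{i10}$, all shared axioms are immediate; \eqref{i9} and \eqref{i10} are present; \eqref{i7} is derivable from \eqref{i9} and \eqref{i10} by Remark \ref{rem.ax2}; and the $\mathbf{QBDi3}$-form of \eqref{deMI} follows from the $\mathbf{QDN3}$-form by substituting $\neg\neg A$ with $\neg\Not A$ using the equivalence above. Conversely, for $\mathbf{QDN3}+\eqref{i10}\subseteq\mathbf{QBDi3}$, every axiom of $\mathbf{QDN3}$ other than its replaced \eqref{deMI} is explicitly present in $\mathbf{QBDi3}$, and the $\mathbf{QDN3}$-form of \eqref{deMI} follows from the $\mathbf{QBDi3}$-form by the same substitution, which is again licensed by the lemma (both \eqref{i9} and \eqref{i10} being axioms of $\mathbf{QBDi3}$).

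No step here looks genuinely hard; the only thing to be slightly careful about is making sure that the exchange of $\neg\Not A$ and $\neg\neg A$ inside the de Morgan biconditional is formally justified by a single intuitionistic substitution using $\vdash\neg\Not A\leftrightarrow\neg\neg A$, so that the argument does not implicitly appeal to extensionality beyond what the intuitionistic base provides. Once that is spelled out, the two inclusions are routine.
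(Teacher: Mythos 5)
Your proposal is correct and follows essentially the same route as the paper: the key lemma in both is the provable equivalence $\neg\Not A\leftrightarrow\neg\neg A$, which makes the two forms of the negated-implication axiom inter-derivable (the paper derives the equivalence slightly differently in {\bf QBDi3}, via $\neg\Not A\leftrightarrow\Not\neg A$ and \eqref{i9}, whereas you give a uniform derivation from \eqref{i9} and \eqref{i10}, but this is a minor variation). Your additional care about \eqref{i7} (via Remark \ref{rem.ax2}) and about the legitimacy of the replacement is sound and consistent with the paper's setup.
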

\begin{proof}
It suffices to show that $\neg\Not A\leftrightarrow\neg\neg A$ in each system, for then the two conditions for negated implications become inter-derivable. For {\bf QBDi3}, it follows from \eqref{i9} using $\neg\Not A\leftrightarrow\Not\neg A$. For ${\bf QDN3}+\eqref{i10}$, one direction follows from \eqref{i9} and the other direction is equivalent to \eqref{i10}.   
\end{proof}

\begin{remark}
This also means that another advantage of {\bf DN3} over {\bf N3} claimed in \cite{Niki2023}, namely that contraposition is available in a limited form $(\neg A\to B)\to(\Not B\to\Not\neg A)$, also holds for {\bf QBDi3}.   
\end{remark}

On the other hand, {\bf QDN4} is not a subsystem of {\bf QBDi}; that would imply $\vdash_{i}\Not\neg A\leftrightarrow\neg\neg A$ and thus $\vdash_{i}\neg\Not A\to\neg\neg A$, i.e. \eqref{i10} that separates {\bf QBDi} from {\bf QBDi3}.

\begin{remark}
In \cite{Niki2023}, we observed another extension of {\bf DN4} by the axiom schema $A\lor\Not A$. At the propositional level, this already derives the weak law of excluded middle $\neg\neg A\lor \neg A$. If we consider a predicate expansion of this logic, then for the semantics 
to validate $\forall{x}A\lor\Not\forall{x}A$ we seem to require that a model has a constant domain.\footnote{This situation is similar to the case for the predicate extension {\bf QC3} of a connexive logic {\bf C3}. \cite{olkhovikov2023completeness,omori2020extension}} This suggests the adoption of the constant domain axiom $\forall{x}(A(x)\lor C)\to (\forall{x}A(x)\lor C)$ in the expansion. On the other hand, the combination of the weak excluded middle and the constant domain axiom is known to cause Kripke incompleteness in intermediate logics \cite{ghilardi1989presheaf,shehtman1990semantics}.  So an adequate treatment of the predicate system for this extension is expected to need more sophistications.

 \end{remark}

\subsection{A connexive variant?}

One of the most well-known variant of {\bf N4} is the logic {\bf C} introduced by Wansing \cite{Wansing2005}. This is obtained by replacing the conjunction in the {\bf N4} condition $\Not(A\to B)\leftrightarrow(A\land\Not B)$ by implication. As a result of this change, {\bf C} validates \emph{Aristotle's theses} $\Not(A\to\Not A)$, $\Not(\Not A\to A)$ and \emph{Boethius' theses} $(A\to B)\to\Not(A\to\Not B)$ and $(A\to\Not B)\to\Not(A\to B)$ characteristic to connexive logic \cite{SEPconnexive}.

We can also test what happens if a similar change is made to {\bf BDi}. In this case, \eqref{deMI} becomes $\Not(A{\to} B){\to}(\neg\Not A{\to}\Not B)$ and otherwise the axiomatisation is kept intact. Then the theses become equivalent to $\neg\Not A{\to} A$, $\neg A{\to}\Not A$ (for Aristotle's theses) and $(A{\to} B){\to}(\neg\Not A{\to} B)$, $(A{\to}\Not B){\to}(\neg\Not A{\to} \Not B)$ (for Boethius' theses). So the resulting  system is not connexive, but only humbly connexive (cf. \cite{Kapsner2019humble}).\\
\indent Another characteristic of {\bf C} is that it is non-trivial but \emph{negation inconsistent}, i.e. it validates a formula and its (strong) negation. That this would also be negation inconsistent in our variant of {\bf BDi} is evident as $\Not\bot$ is one of the axioms. We also find a witness for negation inconsistency even in the absence of this axiom: e.g. both $(p\land\Not\neg\Not p)\to\Not \neg\Not p$ and $\Not((p\land\Not\neg\Not p)\to\Not \neg\Not p)$ turn out to be derivable. This system (and its extension with the variants of the connexive theses) remains \emph{non-trivial}; this is checkable with the classical truth tables which in addition assigns every formula of the form $\Not A$ the value ${\bf 1}$.

\section{Concluding remarks}

Our main motivation was to connect {\bf BD+} and its intuitionistic counterpart {\bf BDi} (in the first-order setting) with neighbouring systems. We firstly focused on establishing the picture of {\bf BDi} and {\bf HYPE} as sibling systems, through the formulation of star semantics for {\bf QBD+}. Our suggestion there was to understand the two systems as results of constuctivising {\bf BD+} along different (American/Australian) semantical contours. One question that remains, connecting back to the example of {\bf S5} in the introduction, is whether there are other siblings for the two systems: i.e. a logic with the intuitionistic positive part, whose extension by Peirce's law coincides with {\bf BD+}. Another venue would be to compare {\bf BDi} and {\bf HYPE} in more details, by e.g. introducing star semantics for {\bf BDi} following ones for {\bf N4} by Routley \cite{Routley1974semantical}.\\
\indent The second focus in this article was to compare {\bf QBD+} from a more Nelsonian viewpoint. For this purpose an explosive system {\bf QBDi3} was introduced. We observed a remarkable feature of this system that the falsity condition for implication now settles the status of potential omniscience and double negation shift. Since the motivations for these principles are by themselves not too clear, the falsity condition can provide another route to analyse their desirability. A further understanding of the falsity condition may be facilitated by comparison with the \emph{strong implication} $A\Rightarrow B:=(A\to B)\land(\Not B\to\Not A)$ in {\bf BDi} and {\bf BDi3} (also for {\bf DN4} and {\bf DN3}), following the approach for {\bf N3}/{\bf N4} in \cite{spinks2008constructive,spinks2008constructive2,spinks2018paraconsistent}.

\bibliographystyle{eptcs}
\bibliography{Wansing-style,FDE-modal-logics}

\begin{thebibliography}{10}
\providecommand{\bibitemdeclare}[2]{}
\providecommand{\surnamestart}{}
\providecommand{\surnameend}{}
\providecommand{\urlprefix}{Available at }
\providecommand{\url}[1]{\texttt{#1}}
\providecommand{\href}[2]{\texttt{#2}}
\providecommand{\urlalt}[2]{\href{#1}{#2}}
\providecommand{\doi}[1]{doi:\urlalt{https://doi.org/#1}{#1}}
\providecommand{\eprint}[1]{arXiv:\urlalt{https://arxiv.org/abs/#1}{#1}}
\providecommand{\bibinfo}[2]{#2}

\bibitemdeclare{incollection}{aczel1968saturated}
\bibitem{aczel1968saturated}
\bibinfo{author}{Peter H.~G. \surnamestart Aczel\surnameend}
  (\bibinfo{year}{1968}): \emph{\bibinfo{title}{Saturated intuitionistic
  theories}}.
\newblock In: {\slshape \bibinfo{booktitle}{Studies in Logic and the
  Foundations of Mathematics}}, \bibinfo{volume}{50},
  \bibinfo{publisher}{Elsevier}, pp. \bibinfo{pages}{1--11},
  \doi{10.1016/S0049-237X(08)70515-9}.

\bibitemdeclare{article}{almukdad1984}
\bibitem{almukdad1984}
\bibinfo{author}{Ahmad \surnamestart Almukdad\surnameend} \&
  \bibinfo{author}{David \surnamestart Nelson\surnameend}
  (\bibinfo{year}{1984}): \emph{\bibinfo{title}{Constructible falsity and
  inexact predicates}}.
\newblock {\slshape \bibinfo{journal}{The Journal of Symbolic Logic}}
  \bibinfo{volume}{49}(\bibinfo{number}{1}), pp. \bibinfo{pages}{231--233},
  \doi{10.2307/2274105}.

\bibitemdeclare{article}{Ardeshir2014}
\bibitem{Ardeshir2014}
\bibinfo{author}{Mohammad \surnamestart Ardeshir\surnameend} \&
  \bibinfo{author}{S.~Mojtaba \surnamestart Mojtahedi\surnameend}
  (\bibinfo{year}{2014}): \emph{\bibinfo{title}{Completeness of intermediate
  logics with doubly negated axioms}}.
\newblock {\slshape \bibinfo{journal}{Mathematical Logic Quarterly}}
  \bibinfo{volume}{60}(\bibinfo{number}{1-2}), pp. \bibinfo{pages}{6--11},
  \doi{10.1002/malq.201200083}.

\bibitemdeclare{article}{Avron2020normal}
\bibitem{Avron2020normal}
\bibinfo{author}{Arnon \surnamestart Avron\surnameend} (\bibinfo{year}{2020}):
  \emph{\bibinfo{title}{The normal and self-extensional extension of
  Dunn--Belnap logic}}.
\newblock {\slshape \bibinfo{journal}{Logica Universalis}}
  \bibinfo{volume}{14}(\bibinfo{number}{3}), pp. \bibinfo{pages}{281--296},
  \doi{10.1007/s11787-020-00254-1}.

\bibitemdeclare{article}{BaCl04}
\bibitem{BaCl04}
\bibinfo{author}{Diderik \surnamestart Batens\surnameend} \&
  \bibinfo{author}{Kristof \surnamestart De~Clercq\surnameend}
  (\bibinfo{year}{2004}): \emph{\bibinfo{title}{A rich paraconsistent extension
  of full positive logic}}.
\newblock {\slshape \bibinfo{journal}{Logique et Analyse}}
  \bibinfo{volume}{185--188}, pp. \bibinfo{pages}{227--257}.

\bibitemdeclare{article}{JYB2011}
\bibitem{JYB2011}
\bibinfo{author}{Jean-Yves \surnamestart B{\'e}ziau\surnameend}
  (\bibinfo{year}{2011}): \emph{\bibinfo{title}{A New Four-Valued Approach to
  Modal Logic}}.
\newblock {\slshape \bibinfo{journal}{Logique et Analyse}}
  \bibinfo{volume}{54}(\bibinfo{number}{213}), pp. \bibinfo{pages}{109--121}.

\bibitemdeclare{book}{Chagrov1997}
\bibitem{Chagrov1997}
\bibinfo{author}{Alexander \surnamestart Chagrov\surnameend} \&
  \bibinfo{author}{Michael \surnamestart Zakharyaschev\surnameend}
  (\bibinfo{year}{1997}): \emph{\bibinfo{title}{Modal Logic}}.
\newblock \bibinfo{publisher}{Oxford},
  \doi{10.1093/oso/9780198537793.001.0001}.

\bibitemdeclare{article}{DeOmori15}
\bibitem{DeOmori15}
\bibinfo{author}{Michael \surnamestart De\surnameend} \&
  \bibinfo{author}{Hitoshi \surnamestart Omori\surnameend}
  (\bibinfo{year}{2015}): \emph{\bibinfo{title}{Classical negation and
  expansions of {B}elnap-{D}unn logic}}.
\newblock {\slshape \bibinfo{journal}{Studia Logica}}
  \bibinfo{volume}{103}(\bibinfo{number}{4}), pp. \bibinfo{pages}{825--851},
  \doi{10.1007/s11225-014-9595-7}.

\bibitemdeclare{article}{drobyshevich2015}
\bibitem{drobyshevich2015}
\bibinfo{author}{Sergey \surnamestart Drobyshevich\surnameend}
  (\bibinfo{year}{2015}): \emph{\bibinfo{title}{Double negation operator in
  logic N$^{*}$}}.
\newblock {\slshape \bibinfo{journal}{Journal of Mathematical sciences}}
  \bibinfo{volume}{205}(\bibinfo{number}{3}), pp. \bibinfo{pages}{389--403},
  \doi{10.1007/s10958-015-2254-3}.

\bibitemdeclare{incollection}{drobyshevich2022moisil}
\bibitem{drobyshevich2022moisil}
\bibinfo{author}{Sergey \surnamestart Drobyshevich\surnameend},
  \bibinfo{author}{Sergei \surnamestart Odintsov\surnameend} \&
  \bibinfo{author}{Heinrich \surnamestart Wansing\surnameend}
  (\bibinfo{year}{2022}): \emph{\bibinfo{title}{Moisil’s modal logic and
  related systems}}.
\newblock In \bibinfo{editor}{Katalin \surnamestart Bimb{\'o}\surnameend},
  editor: {\slshape \bibinfo{booktitle}{{Relevance Logics and Other Tools for
  Reasoning. Essays in Honour of Michael Dunn}}}, \bibinfo{publisher}{College
  Publications}, pp. \bibinfo{pages}{150--177}.

\bibitemdeclare{article}{fujiwarakohlenbach}
\bibitem{fujiwarakohlenbach}
\bibinfo{author}{Makoto \surnamestart Fujiwara\surnameend} \&
  \bibinfo{author}{Ulrich \surnamestart Kohlenbach\surnameend}
  (\bibinfo{year}{2018}): \emph{\bibinfo{title}{Interrelation between weak
  fragments of double negation shift and related principles}}.
\newblock {\slshape \bibinfo{journal}{The Journal of Symbolic Logic}}
  \bibinfo{volume}{83}(\bibinfo{number}{3}), pp. \bibinfo{pages}{991--1012},
  \doi{10.1017/jsl.2017.63}.

\bibitemdeclare{article}{Gabbay1972}
\bibitem{Gabbay1972}
\bibinfo{author}{Dov~M \surnamestart Gabbay\surnameend} (\bibinfo{year}{1972}):
  \emph{\bibinfo{title}{Applications of trees to intermediate logics}}.
\newblock {\slshape \bibinfo{journal}{The Journal of Symbolic Logic}}
  \bibinfo{volume}{37}(\bibinfo{number}{1}), pp. \bibinfo{pages}{135--138},
  \doi{10.2307/2272556}.

\bibitemdeclare{article}{ghilardi1989presheaf}
\bibitem{ghilardi1989presheaf}
\bibinfo{author}{Silvio \surnamestart Ghilardi\surnameend}
  (\bibinfo{year}{1989}): \emph{\bibinfo{title}{Presheaf semantics and
  independence results for some non-classical first-order logics}}.
\newblock {\slshape \bibinfo{journal}{Archive for Mathematical Logic}}
  \bibinfo{volume}{29}(\bibinfo{number}{2}), pp. \bibinfo{pages}{125--136},
  \doi{10.1007/BF01620621}.

\bibitemdeclare{article}{gurevich1977}
\bibitem{gurevich1977}
\bibinfo{author}{Yuri \surnamestart Gurevich\surnameend}
  (\bibinfo{year}{1977}): \emph{\bibinfo{title}{Intuitionistic logic with
  strong negation}}.
\newblock {\slshape \bibinfo{journal}{Studia Logica: An International Journal
  for Symbolic Logic}} \bibinfo{volume}{36}(\bibinfo{number}{1/2}), pp.
  \bibinfo{pages}{49--59}, \doi{10.1007/bf02121114}.

\bibitemdeclare{article}{hasuo2003kripke}
\bibitem{hasuo2003kripke}
\bibinfo{author}{Ichiro \surnamestart Hasuo\surnameend} \& \bibinfo{author}{Ryo
  \surnamestart Kashima\surnameend} (\bibinfo{year}{2003}):
  \emph{\bibinfo{title}{Kripke completeness of first-order constructive logics
  with strong negation}}.
\newblock {\slshape \bibinfo{journal}{Logic Journal of IGPL}}
  \bibinfo{volume}{11}(\bibinfo{number}{6}), pp. \bibinfo{pages}{615--646},
  \doi{10.1093/jigpal/11.6.615}.

\bibitemdeclare{article}{kamide2021modal}
\bibitem{kamide2021modal}
\bibinfo{author}{Norihiro \surnamestart Kamide\surnameend}
  (\bibinfo{year}{2021}): \emph{\bibinfo{title}{Modal and Intuitionistic
  Variants of Extended {B}elnap--{D}unn Logic with Classical Negation}}.
\newblock {\slshape \bibinfo{journal}{Journal of Logic, Language and
  Information}} \bibinfo{volume}{30}, p. \bibinfo{pages}{491–531},
  \doi{10.1007/s10849-021-09330-1}.

\bibitemdeclare{incollection}{Kamide2022}
\bibitem{Kamide2022}
\bibinfo{author}{Norihiro \surnamestart Kamide\surnameend}
  (\bibinfo{year}{2022}): \emph{\bibinfo{title}{Herbrand and
  contraposition-elimination theorems for extended first-order Belnap-Dunn
  logic}}.
\newblock In \bibinfo{editor}{Katalin \surnamestart Bimb{\'o}\surnameend},
  editor: {\slshape \bibinfo{booktitle}{Relevance Logics and other Tools for
  Reasoning: Essays in Honor of J. Michael Dunn}}, \bibinfo{publisher}{College
  Publications}, pp. \bibinfo{pages}{237--260}.

\bibitemdeclare{inproceedings}{Kamide2017extended}
\bibitem{Kamide2017extended}
\bibinfo{author}{Norihiro \surnamestart Kamide\surnameend} \&
  \bibinfo{author}{Hitoshi \surnamestart Omori\surnameend}
  (\bibinfo{year}{2017}): \emph{\bibinfo{title}{{An extended first-order
  Belnap-Dunn Logic with classical negation}}}.
\newblock In: {\slshape \bibinfo{booktitle}{International Workshop on Logic,
  Rationality and Interaction}}, \bibinfo{organization}{Springer}, pp.
  \bibinfo{pages}{79--93}, \doi{10.1007/978-3-662-55665-8_6}.

\bibitemdeclare{article}{Kamide2010symmetric}
\bibitem{Kamide2010symmetric}
\bibinfo{author}{Norihiro \surnamestart Kamide\surnameend} \&
  \bibinfo{author}{Heinrich \surnamestart Wansing\surnameend}
  (\bibinfo{year}{2010}): \emph{\bibinfo{title}{Symmetric and dual
  paraconsistent logics}}.
\newblock {\slshape \bibinfo{journal}{Logic and Logical Philosophy}}
  \bibinfo{volume}{19}(\bibinfo{number}{1-2}), pp. \bibinfo{pages}{7--30},
  \doi{10.12775/LLP.2010.002}.

\bibitemdeclare{article}{Kapsner2019humble}
\bibitem{Kapsner2019humble}
\bibinfo{author}{Andreas \surnamestart Kapsner\surnameend}
  (\bibinfo{year}{2019}): \emph{\bibinfo{title}{Humble connexivity}}.
\newblock {\slshape \bibinfo{journal}{Logic and Logical Philosophy}}
  \bibinfo{volume}{28}(\bibinfo{number}{3}), pp. \bibinfo{pages}{513--536},
  \doi{10.12775/LLP.2019.001}.

\bibitemdeclare{article}{komori1983}
\bibitem{komori1983}
\bibinfo{author}{Yuichi \surnamestart Komori\surnameend}
  (\bibinfo{year}{1983}): \emph{\bibinfo{title}{Some results on the
  super-intuitionistic predicate logics}}.
\newblock {\slshape \bibinfo{journal}{Reports on Mathematical Logic}}
  \bibinfo{volume}{15}, pp. \bibinfo{pages}{13--31}.

\bibitemdeclare{article}{Leitgeb2019hype}
\bibitem{Leitgeb2019hype}
\bibinfo{author}{Hannes \surnamestart Leitgeb\surnameend}
  (\bibinfo{year}{2019}): \emph{\bibinfo{title}{{HYPE: A system of
  hyperintensional logic (with an application to semantic paradoxes)}}}.
\newblock {\slshape \bibinfo{journal}{Journal of Philosophical Logic}}
  \bibinfo{volume}{48}(\bibinfo{number}{2}), pp. \bibinfo{pages}{305--405},
  \doi{10.1007/s10992-018-9467-0}.

\bibitemdeclare{article}{nelson1949}
\bibitem{nelson1949}
\bibinfo{author}{David \surnamestart Nelson\surnameend} (\bibinfo{year}{1949}):
  \emph{\bibinfo{title}{Constructible falsity}}.
\newblock {\slshape \bibinfo{journal}{The Journal of Symbolic Logic}}
  \bibinfo{volume}{14}(\bibinfo{number}{1}), pp. \bibinfo{pages}{16--26},
  \doi{10.2307/2268973}.

\bibitemdeclare{article}{Niki2023}
\bibitem{Niki2023}
\bibinfo{author}{Satoru \surnamestart Niki\surnameend} (\bibinfo{year}{2023}):
  \emph{\bibinfo{title}{Improving Strong Negation}}.
\newblock {\slshape \bibinfo{journal}{The Review of Symbolic Logic}}
  \bibinfo{volume}{16}(\bibinfo{number}{3}), p. \bibinfo{pages}{951–977},
  \doi{10.1017/S1755020321000290}.

\bibitemdeclare{article}{odintsovWansing2021}
\bibitem{odintsovWansing2021}
\bibinfo{author}{Sergei \surnamestart Odintsov\surnameend} \&
  \bibinfo{author}{Heinrich \surnamestart Wansing\surnameend}
  (\bibinfo{year}{2021}): \emph{\bibinfo{title}{Routley star and
  hyperintensionality}}.
\newblock {\slshape \bibinfo{journal}{Journal of Philosophical Logic}}
  \bibinfo{volume}{50}(\bibinfo{number}{1}), pp. \bibinfo{pages}{33--56},
  \doi{10.1007/s10992-020-09558-5}.

\bibitemdeclare{article}{olkhovikov2023completeness}
\bibitem{olkhovikov2023completeness}
\bibinfo{author}{Grigory~K \surnamestart Olkhovikov\surnameend}
  (\bibinfo{year}{2023}): \emph{\bibinfo{title}{On the completeness of some
  first-order extensions of C}}.
\newblock {\slshape \bibinfo{journal}{Journal of Applied Logics-IfCoLog
  Journal}} \bibinfo{volume}{10}(\bibinfo{number}{1}), pp.
  \bibinfo{pages}{57--114}, \doi{10.13154/294-9815}.

\bibitemdeclare{inproceedings}{Omori2019sixteen}
\bibitem{Omori2019sixteen}
\bibinfo{author}{Hitoshi \surnamestart Omori\surnameend} \&
  \bibinfo{author}{Daniel \surnamestart Skurt\surnameend}
  (\bibinfo{year}{2019}): \emph{\bibinfo{title}{{SIXTEEN$_3$ in Light of
  Routley Stars}}}.
\newblock In: {\slshape \bibinfo{booktitle}{Proceedings of WoLLIC 2019}},
  \bibinfo{organization}{Springer}, pp. \bibinfo{pages}{516--532},
  \doi{10.1007/978-3-662-59533-6_31}.

\bibitemdeclare{inproceedings}{omori2020extension}
\bibitem{omori2020extension}
\bibinfo{author}{Hitoshi \surnamestart Omori\surnameend} \&
  \bibinfo{author}{Heinrich \surnamestart Wansing\surnameend}
  (\bibinfo{year}{2020}): \emph{\bibinfo{title}{An Extension of Connexive Logic
  {C}.}}
\newblock In \bibinfo{editor}{Nicola \surnamestart Olivetti\surnameend},
  \bibinfo{editor}{Rineke \surnamestart Verbrugge\surnameend},
  \bibinfo{editor}{Sara \surnamestart Negri\surnameend} \&
  \bibinfo{editor}{Gabriel \surnamestart Sandu\surnameend}, editors: {\slshape
  \bibinfo{booktitle}{Advances in Modal Logic}}, \bibinfo{volume}{13},
  \bibinfo{publisher}{College Publications}, pp. \bibinfo{pages}{503--522}.

\bibitemdeclare{article}{Ono1977}
\bibitem{Ono1977}
\bibinfo{author}{Hiroakira \surnamestart Ono\surnameend}
  (\bibinfo{year}{1977}): \emph{\bibinfo{title}{On some intuitionistic modal
  logics}}.
\newblock {\slshape \bibinfo{journal}{Publications of the Research Institute
  for Mathematical Sciences}} \bibinfo{volume}{13}(\bibinfo{number}{3}), pp.
  \bibinfo{pages}{687--722}, \doi{10.2977/prims/1195189604}.

\bibitemdeclare{book}{ono2019proof}
\bibitem{ono2019proof}
\bibinfo{author}{Hiroakira \surnamestart Ono\surnameend}
  (\bibinfo{year}{2019}): \emph{\bibinfo{title}{Proof Theory and Algebra in
  Logic}}.
\newblock \bibinfo{publisher}{Springer}, \doi{10.1007/978-981-13-7997-0}.

\bibitemdeclare{article}{rivieccio2022}
\bibitem{rivieccio2022}
\bibinfo{author}{Umberto \surnamestart Rivieccio\surnameend}
  (\bibinfo{year}{2022}): \emph{\bibinfo{title}{Fragments of quasi-{N}elson:
  The algebraizable core}}.
\newblock {\slshape \bibinfo{journal}{Logic Journal of the IGPL}}
  \bibinfo{volume}{30}(\bibinfo{number}{5}), pp. \bibinfo{pages}{807--839},
  \doi{10.1093/jigpal/jzab023}.

\bibitemdeclare{article}{rivieccio2019}
\bibitem{rivieccio2019}
\bibinfo{author}{Umberto \surnamestart Rivieccio\surnameend} \&
  \bibinfo{author}{Matthew \surnamestart Spinks\surnameend}
  (\bibinfo{year}{2019}): \emph{\bibinfo{title}{Quasi-{N}elson algebras}}.
\newblock {\slshape \bibinfo{journal}{Electronic Notes in Theoretical Computer
  Science}} \bibinfo{volume}{344}, pp. \bibinfo{pages}{169--188},
  \doi{10.1016/j.entcs.2019.07.011}.

\bibitemdeclare{article}{robles2014simple}
\bibitem{robles2014simple}
\bibinfo{author}{Gemma \surnamestart Robles\surnameend} (\bibinfo{year}{2014}):
  \emph{\bibinfo{title}{A simple Henkin-style completeness proof for
  {G}{\"o}del 3-valued logic {G}3}}.
\newblock {\slshape \bibinfo{journal}{Logic and Logical Philosophy}}
  \bibinfo{volume}{23}(\bibinfo{number}{4}), pp. \bibinfo{pages}{371--390},
  \doi{10.12775/LLP.2014.001}.

\bibitemdeclare{article}{Routley1974semantical}
\bibitem{Routley1974semantical}
\bibinfo{author}{Richard \surnamestart Routley\surnameend}
  (\bibinfo{year}{1974}): \emph{\bibinfo{title}{{Semantical analyses of
  propositional systems of Fitch and Nelson}}}.
\newblock {\slshape \bibinfo{journal}{Studia Logica}}
  \bibinfo{volume}{33}(\bibinfo{number}{3}), pp. \bibinfo{pages}{283--298},
  \doi{10.1007/BF02123283}.

\bibitemdeclare{article}{shehtman1990semantics}
\bibitem{shehtman1990semantics}
\bibinfo{author}{Valentin \surnamestart Shehtman\surnameend} \&
  \bibinfo{author}{Dmitrij \surnamestart Skvortsov\surnameend}
  (\bibinfo{year}{1990}): \emph{\bibinfo{title}{Semantics of non-classical
  first order predicate logics}}.
\newblock {\slshape \bibinfo{journal}{Mathematical logic}}, pp.
  \bibinfo{pages}{105--116}, \doi{10.1007/978-1-4613-0609-2_9}.

\bibitemdeclare{article}{spinks2008constructive}
\bibitem{spinks2008constructive}
\bibinfo{author}{Matthew \surnamestart Spinks\surnameend} \&
  \bibinfo{author}{Robert \surnamestart Veroff\surnameend}
  (\bibinfo{year}{2008}): \emph{\bibinfo{title}{Constructive logic with strong
  negation is a substructural logic. {I}}}.
\newblock {\slshape \bibinfo{journal}{Studia Logica}} \bibinfo{volume}{88}, pp.
  \bibinfo{pages}{325--348}, \doi{10.1007/s11225-008-9113-x}.

\bibitemdeclare{article}{spinks2008constructive2}
\bibitem{spinks2008constructive2}
\bibinfo{author}{Matthew \surnamestart Spinks\surnameend} \&
  \bibinfo{author}{Robert \surnamestart Veroff\surnameend}
  (\bibinfo{year}{2008}): \emph{\bibinfo{title}{Constructive logic with strong
  negation is a substructural logic. {II}}}.
\newblock {\slshape \bibinfo{journal}{Studia Logica}} \bibinfo{volume}{89}, pp.
  \bibinfo{pages}{401--425}, \doi{10.1007/s11225-008-9138-1}.

\bibitemdeclare{article}{spinks2018paraconsistent}
\bibitem{spinks2018paraconsistent}
\bibinfo{author}{Matthew \surnamestart Spinks\surnameend} \&
  \bibinfo{author}{Robert \surnamestart Veroff\surnameend}
  (\bibinfo{year}{2018}): \emph{\bibinfo{title}{Paraconsistent constructive
  logic with strong negation as a contraction-free relevant logic}}.
\newblock {\slshape \bibinfo{journal}{Don Pigozzi on Abstract Algebraic Logic,
  Universal Algebra, and Computer Science}}, pp. \bibinfo{pages}{323--379},
  \doi{10.1007/978-3-319-74772-9_13}.

\bibitemdeclare{inproceedings}{SzmucOmori2022}
\bibitem{SzmucOmori2022}
\bibinfo{author}{Damian \surnamestart Szmuc\surnameend} \&
  \bibinfo{author}{Hitoshi \surnamestart Omori\surnameend}
  (\bibinfo{year}{2022}): \emph{\bibinfo{title}{Liberating classical negation
  from falsity conditions}}.
\newblock In: {\slshape \bibinfo{booktitle}{2022 IEEE 52th International
  Symposium on Multiple-Valued Logic}}, \bibinfo{volume}{52},
  \bibinfo{publisher}{IEEE}, pp. \bibinfo{pages}{131--136},
  \doi{10.1109/ISMVL52857.2022.00027}.

\bibitemdeclare{book}{TrDa88i}
\bibitem{TrDa88i}
\bibinfo{author}{Anne~Sjerp \surnamestart Troelstra\surnameend} \&
  \bibinfo{author}{Dirk \surnamestart van Dalen\surnameend}
  (\bibinfo{year}{1988}): \emph{\bibinfo{title}{Constructivism in Mathematics:
  An Introduction}}.
\newblock \bibinfo{volume}{I}, \bibinfo{publisher}{Elsevier}.

\bibitemdeclare{incollection}{Wansing2005}
\bibitem{Wansing2005}
\bibinfo{author}{Heinrich \surnamestart Wansing\surnameend}
  (\bibinfo{year}{2005}): \emph{\bibinfo{title}{{Connexive Modal Logic}}}.
\newblock In \bibinfo{editor}{Renate \surnamestart Schmidt\surnameend},
  \bibinfo{editor}{Ian \surnamestart Pratt-Hartmann\surnameend},
  \bibinfo{editor}{Mark \surnamestart Reynolds\surnameend} \&
  \bibinfo{editor}{Heinrich \surnamestart Wansing\surnameend}, editors:
  {\slshape \bibinfo{booktitle}{Advances in Modal Logic. Volume 5}},
  \bibinfo{publisher}{King's College Publications}, pp.
  \bibinfo{pages}{367--383}.

\bibitemdeclare{incollection}{SEPconnexive}
\bibitem{SEPconnexive}
\bibinfo{author}{Heinrich \surnamestart Wansing\surnameend}
  (\bibinfo{year}{2023}): \emph{\bibinfo{title}{{Connexive Logic}}}.
\newblock In \bibinfo{editor}{Edward~N. \surnamestart Zalta\surnameend},
  editor: {\slshape \bibinfo{booktitle}{The {Stanford} Encyclopedia of
  Philosophy}}, \bibinfo{edition}{{S}ummer 2023} edition,
  \bibinfo{publisher}{Metaphysics Research Lab, Stanford University}.

\bibitemdeclare{article}{Zaitsev2012}
\bibitem{Zaitsev2012}
\bibinfo{author}{Dmitry \surnamestart Zaitsev\surnameend}
  (\bibinfo{year}{2012}): \emph{\bibinfo{title}{Generalized relevant logic and
  models of reasoning}}.
\newblock {\slshape \bibinfo{journal}{Moscow State Lomonosov University
  doctoral (Doctor of Science) dissertation}}.

\end{thebibliography}
\end{document}